\documentclass[draftclsnofoot,onecolumn]{IEEEtran}

\usepackage{amssymb,amsfonts,amsmath,amstext,mathrsfs}
\usepackage{latexsym}
\usepackage{epsfig,psfrag,color,graphics,epsf}
\usepackage{enumerate,cite}

\usepackage{amsmath}
\usepackage{amssymb}
\usepackage{amsfonts}
\usepackage{amsthm}
\usepackage{graphicx}
\usepackage{float}
\usepackage{graphics}
\usepackage{graphicx}
\usepackage{euscript}
\usepackage{psfrag}

\newtheorem{theorem}{Theorem}
\newtheorem{lemma}{Lemma}

\newtheorem{corollary}{Corollary}

\newtheorem{proposition}{Proposition}

\newcommand {\bv} {\mbox{\boldmath $v$}}

\newcommand {\bx} {\mbox{\boldmath $x$}}
\newcommand {\by} {\mbox{\boldmath $y$}}

\newcommand {\bE} {\mbox{\boldmath $E$}}

\newcommand {\bX} {\mbox{\boldmath $X$}}
\newcommand {\bY} {\mbox{\boldmath $Y$}}

\newcommand{\calC}{{\cal C}}
\newcommand{\calD}{{\cal D}}
\newcommand{\calE}{{\cal E}}

\newcommand{\calG}{{\cal G}}

\newcommand{\calK}{{\cal K}}
\newcommand{\calL}{{\cal L}}

\newcommand{\calP}{{\cal P}}

\newcommand{\calR}{{\cal R}}

\newcommand{\calX}{{\cal X}}
\newcommand{\calY}{{\cal Y}}

\begin{document}

\sloppy

\title{On Achievable Rates for Channels with Mismatched Decoding 
}

\author{
  Anelia Somekh-Baruch\thanks{A.\ Somekh-Baruch is with the Faculty of Engineering at Bar-Ilan University, Ramat-Gan, Israel.  Email: somekha@biu.ac.il. A shorter version of this paper was accepted to the International Symposium on Information Theory 2013. This paper was submitted to the IEEE Transactions on Information Theory.}
}
\maketitle

\begin{abstract}

The problem of mismatched decoding for discrete memoryless channels is addressed. A mismatched cognitive multiple-access channel is introduced, and an inner bound on its capacity region is derived using two alternative encoding methods: superposition coding and random binning. 
The inner bounds are derived by analyzing the average error probability of the code ensemble for both methods and by a tight characterization of the resulting error exponents. Random coding converse theorems are also derived. 
A comparison of the achievable regions shows that in the matched case, random binning performs as well as superposition coding, i.e., the region achievable by random binning is equal to the capacity region. 
The achievability results are further specialized to obtain a lower bound on the mismatch capacity of the single-user channel by investigating a cognitive multiple access channel whose achievable sum-rate serves as a lower bound on the single-user channel's capacity. 
In certain cases, for given auxiliary random variables this bound strictly improves on the achievable rate derived by Lapidoth.
\end{abstract}

\vspace{9cm}

\pagebreak

\section{Introduction}\label{sc: Introduction}

The mismatch capacity is the highest achievable rate using a given decoding rule.  Ideally, the decoder uses the maximum-likelihood rule which minimizes the average probability of error, or other asymptotically optimal decoders such as the joint typicality decoder, or the Maximum Mutual Information (MMI) decoder \cite{CsiszarKorner81}. The mismatch capacity reflects a practical situation in which due to inaccurate knowledge of the channel, or other practical limitations, the receiver is constrained to use a possibly suboptimal decoder. This paper focuses on mismatched decoders that are defined by a mapping $q$, which for convenience will be referred to as "metric" from the product of the channel input and output alphabets to the reals. The decoding rule maximizes, among all the codewords, the accumulated sum of metrics between the channel output sequence and the codeword.

Mismatched decoding has been studied extensively for discrete memoryless channels (DMCs).
A random coding lower bound on the mismatched capacity was derived by Csisz\'{a}r and K{\"o}rner
and by Hui \cite{CsiszarKorner81graph}, \cite{Hui83}. 
Csisz\'{a}r and Narayan \cite{CsiszarNarayan95} showed that the random coding bound is not tight.
They established this result by proving that the random coding bound for the product channel $P_{Y_1,Y_2|X_1,X_2}=P_{Y_1|X_1}\times P_{Y_2|X_2}$ (two consecutive channel uses of channel $P_{Y|X}$) may result in higher achievable rates. 
Nevertheless, it was shown in \cite{CsiszarNarayan95} that the positivity of the random coding lower-bound is a necessary condition for a positive mismatched capacity. A converse theorem for the mismatched binary-input DMC was proved in \cite{Balakirsky95}, but in general, the problem of determining the mismatch capacity of the DMC remains open.

Lapidoth \cite{Lapidoth96} introduced the mismatched multiple access channel (MAC) and derived an inner bound on its capacity region. The study of the MAC case led to an improved lower bound on the mismatch capacity of the single-user DMC by considering the maximal sum-rate of an appropriately chosen mismatched MAC whose 
codebook is obtained by expurgating codewords from the product of codebooks of the two users. 

In 
 \cite{CsiszarKorner81graph}, 
an error exponent for
random coding with fixed composition codes and mismatched decoding was established using a
graph decomposition theorem. In a recent work, Scarlett and {Guill\'{e}n i F\`{a}bregas} \cite{ScarlettFabregas2012} characterized the achievable error exponents obtained by a constant-composition random
coding scheme for the MAC. 
For other related works and extensions see 
\cite{Balakirsky_conference_95,MerhavKaplanLapidothShamai94,LiuHughes96,Lapidoth96b,GantiLapidothTelatar2000,ScarlettAlfonsoFabregas2013,ScarlettMartinezGuilleniFabregas2012AllertonSU} and references therein.

This paper introduces the {\it cognitive} mismatched two-user multiple access channel. 
The matched counterpart of this channel is in fact a special case of the MAC with a common message studied by Slepian and Wolf \cite{SlepianWolf73}. Encoder $1$ shares the message index   it wishes to transmit with Encoder $2$ (the cognitive encoder), and the latter transmits an additional message to the same receiver. 
 Two achievability schemes for this channel with a mismatched decoder are presented. The first scheme is based on superposition coding and the second uses random binning.
The achievable regions are compared, and an example is shown in which the achievable region obtained by random binning is strictly larger than the rate-region achieved by superposition coding. 
In general it seems that neither achievable region dominates the other, and conditions are shown under which random binning is guaranteed to perform at least as well as supposition coding in terms of achievable rates and vice versa. As a special case it is shown that in the matched case, where it is well known that superposition coding is capacity-achieving, binning also achieves the capacity region. The resulting region of the cognitive mismatched MAC achievable by binning in fact contains the mismatched non-cognitive MAC achievable region studied by Lapidoth \cite{Lapidoth96}. Although this is not surprising, in certain cases, for fixed auxiliary random variables cardinalities, it serves to derive an improved achievable rate for the mismatched single-user channel.

The outline of this paper is as follows. Section \ref{sc: Notation} presents notation conventions. Section \ref{sc:  Preliminaries} provides
some necessary background in more detail. 
Section \ref{sc: The Mismatched Cognitive MAC} introduces the mismatched cognitive MAC and presents the achievable regions. 
Section \ref{sc: discussion} is devoted to discussing the results pertaining to the mismatched cognitive MAC. 
The following section 
\ref{sc: The Implication for the Single-User Channel} presents a lower bound on the capacity of the single-user mismatched DMC. 
Section \ref{eq: Conclusion} develops the concluding remarks.
Finally, the proofs of the main results appear in Appendices \ref{sc: Outline of the proof of Theorem 2}-\ref{sc: first equivalence}.

\section{Notation and Definitions}\label{sc: Notation}

Throughout this paper, scalar random variables are denoted by capital letters, their 
sample values are denoted by the respective lower case letters, and their alphabets are denoted by their
 respective calligraphic letters, e.g.\ $X$, $x$, and $\calX$, respectively. A similar convention applies to random vectors of 
dimension $n$ and their sample values, which are denoted with the same symbols in the boldface font, e.g., $\bx=(x_1,...x_n)$. 
The set of all $n$-vectors with components taking values in a certain finite alphabet are denoted 
by the same alphabet superscripted by $n$, e.g., $\calX^n$.

Information theoretic 
quantities such as entropy, conditional entropy, and mutual information are denoted following the usual conventions in 
the information theory literature, e.g., $H (X )$, $H (X |Y )$, $I(X;Y)$ and so on. To emphasize 
the dependence of the quantity on a certain underlying probability distribution, say $\mu$, it is subscripted 
by $\mu$, i.e., with notations such as $H_\mu(X )$, $H_\mu(X |Y)$, $I_\mu(X;Y)$, etc. 
The divergence (or Kullback -Liebler distance) between two probability measures $\mu$ and $p$ is denoted by $D(\mu\| p)$, and when there is a need to make a distinction between $P$ and $Q$ as joint distributions of $(X,Y)$ as opposed to the corresponding marginal distributions of, say, $X$, subscripts are used to avoid ambiguity, that is, the notations $D(Q_{XY}\|P_{XY})$ and $D(Q_X\|P_X)$.The expectation operator is denoted 
by $\bE \{\cdot\}$, and once again, to make the dependence on the underlying distribution 
$\mu$ clear, it is denoted by $\bE_\mu \{\cdot\}$. 
The cardinality of a finite set $A$ is denoted by $|A|$. 
The indicator function of an event $\calE$ is denoted by $1\{\calE \}$. 

Let $\calP(\calX)$ denote the set of all probability measures on $\calX$. 
For a given sequence $\by \in \calY^n$, $\calY$ being a finite alphabet,  $\hat{P}_{\by}$ denotes the empirical distribution on $\calY$ extracted from $\by$, in other words, $\hat{P}_{\by}$ is the vector $\{ \hat{P}_{\by} (y), y\in\calY\}$, where $ \hat{P}_{\by} (y)$ is the relative frequency of the letter $y$ in the vector $\by$. The type-class of $\bx$ is the set of $\bx'\in\calX^n$ such that $\hat{P}_{\bx'}=\hat{P}_{\bx}$, which is denoted $T(\hat{P}_{\bx})$.
The conditional type-class of $\by$ given $\bx$ is the set of $\tilde{\by}$'s such that $\hat{P}_{\bx,\tilde{\by}}=\hat{P}_{\bx,\by}=Q_{X,Y}$, which is denoted $T(Q_{X,Y} |\bx)$ with a little abuse of notation. The set of empirical measures of order $n$ on alphabet $\calX$ is denoted  $\calP_n(\calX)$.

For two sequences of positive numbers, $\{a_n\}$ and $\{b_n\}$, the notation $a_n\doteq b_n$ means that $\{a_n\}$ and $\{b_n\}$ are of the same exponential order, i.e., $\frac{1}{n} \ln \frac{a_n}{b_n}\rightarrow 0$ as $n\rightarrow \infty$. Similarly, $a_n\stackrel{\cdot}{\leq}b_n$ means that $\limsup_n \frac{1}{n} \ln \frac{a_n}{b_n} \leq 0$, and so on. Another notation is that for a real number $x$, $|x|^+=\max\{0,x\}$. 

Throughout this paper logarithms are taken to base $2$.

\section{Preliminaries} \label{sc:  Preliminaries}

Consider a DMC with a finite input alphabet 
$\calX$ and finite output alphabet $\calY$, which is governed by the conditional p.m.f.\ $P_{Y|X}$. As the channel is fed by an input vector $\bx \in\calX^n$, it generates an output vector $\by \in\calY^n$ according to the sequence of conditional probability distributions 
\begin{equation}P (y_i |x_1 , . . . , x_i , y_1 , . . . , y_{i-1} ) = P_{Y|X}(y_i|x_i), \quad i = 1, 2, . . . , n\end{equation}
where for $i = 1, (y_1 , . . . , y_{i-1}) $ is understood as the null string. A rate-$R$ block-code of length $n$ consists of $ 2^{nR}$  $n$-vectors $\bx(m)$, $m = 1, 2, . . . , 2^{nR}$, which represent $2^{nR}$ different messages, i.e., it is defined by the encoding function
\begin{flalign}
f_n:\; \{1,...,2^{nR}\} \rightarrow \calX^n.
\end{flalign}
It is assumed that all possible messages are a-priori equiprobable, i.e., $P (m) = 2^{-nR}$ for all 
$m$, and denote the random message by $W$.

A mismatched decoder for the channel is defined by a mapping 
\begin{flalign}
q_n:\;  \calX^n\times \calY^n\rightarrow  \mathbb{R},
\end{flalign}
where the decoder declares that message $i$ was transmitted iff 
\begin{flalign}
q_n(\bx(i),\by)>q_n(\bx(j),\by), \forall j\neq i,
\end{flalign}
and if no such $i$ exists, an error is declared. The results in this paper refer to the case of additive decoding functions, i.e.,  
\begin{flalign}
 q_n(x^n,y^n)=\frac{1}{n}\sum_{i=1}^n q(x_i,y_i), 
\end{flalign}
where $q$ is a mapping from $\calX\times \calY$ to $\mathbb{R}$.

A rate $R$ is said to be achievable for the channel $P_{Y|X}$ with a decoding metric $q$ if there exists a sequence of codebooks $\calC_n,\; n\geq 1$ of rate $R$   
such that the average probability of error incurred by the decoder $q_n$ applied to the codebook $\calC_n$ and the channel output vanishes as $n$ tends to infinity.
The capacity of the channel with decoding metric $q$ is the supremum of all achievable rates.

The notion of mismatched decoding can be extended to a MAC $P_{Y|X_1,X_2}$ with codebooks $\calC_{n,1}=\{\bx_1(i)\}, i=1,....,2^{nR_1}$, $\calC_{n,2}=\{\bx_2(j)\}, j=1,....,2^{nR_2}$.
A mismatched decoder for a MAC is defined by the mapping 
\begin{flalign}\label{eq: q_n MAC dfn}
q_n&:\; \calX_1^n\times\calX_2^n\times \calY^n \rightarrow  \mathbb{R},
\end{flalign}
where similar to the single-user's case, the decoder outputs the messages $(i,j)$ iff for all $(i',j')\neq (i,j)$
\begin{flalign}\label{eq: q_n MAC}
q_n(\bx_1(i),\bx_2(j),\by)> q_n(\bx_1(i'),\bx_2(j'),\by).
\end{flalign}
The focus here is on additive decoding functions, i.e.,  
\begin{flalign}\label{eq: qn additive MAC}
q_n(x_1^n,x_2^n,y^n)=\frac{1}{n}\sum_{i=1}^n q(x_{1,i},x_{2,i},y_i),
\end{flalign}
where $q$ is a mapping from $\calX_1\times\calX_2\times \calY$ to $\mathbb{R}$.
The achievable rate-region of the MAC $P_{Y|X_1,X_2}$ with decoding metric $q$ is the closure of the set of rate-pairs $(R_1,R_2)$ for which there exists a sequence of codebooks $\calC_{n,1},\calC_{n,2}$, $n\geq 1$ of rates $R_1$ and $R_2$, respectively, such that the average probability of error that is incurred by the decoder $q_n$ when applied to the codebooks $\calC_{n,1},\calC_{n,2}$ and the channel output vanishes as $n$ tends to infinity.

Before describing the results pertaining to the cognitive MAC, we state the best known inner bound on the capacity region of the mismatched (non-cognitive) MAC which was introduced in \cite{Lapidoth96}. The inner bound is given by $\calR_{LM}$ where
\begin{flalign}\label{eq: calRLM_dfn}
\calR_{LM}=&\mbox{closure of the CH of } \underset{P_{X_1},P_{X_2}}{\cup}  \bigg\{(R_1,R_2):\nonumber\\
R_1&<\tilde{R}_1 =\min_{f\in \calD_{(1)}} I_f(X_1;Y|X_2)+I_f(X_1;X_2)\nonumber\\
R_2&<\tilde{R}_2=\min_{f\in \calD_{(2)}} I_f(X_2;Y|X_1)+I_f(X_1;X_2)\nonumber\\
R_1+R_2&<\tilde{R}_0=\min_{f\in \calD_{(0)}} I_f(X_1,X_2;Y)+I_f(X_1;X_2)\bigg\},
\end{flalign}
where CH stands for "convex hull",
\begin{flalign}
\calD_{(1)}&=\{f_{X_1,X_2,Y}:\; f_{X_1}=P_{X_1}, \nonumber\\
&\quad \quad f_{X_2,Y}=P_{X_2,Y}, \bE_f(q)\geq \bE_P(q)\} \nonumber\\
\calD_{(2)}&=\{f_{X_1,X_2,Y}:\; f_{X_2}=P_{X_2},  \nonumber\\
&\quad \quad f_{X_1,Y}=P_{X_1,Y}, \bE_f(q)\geq \bE_P(q)\}
\nonumber\\
\calD_{(0)}&=\{f_{X_1,X_2,Y}:\; f_{X_1}=P_{X_1},\nonumber\\
&\quad \quad  f_{X_2}=P_{X_2},  f_{Y}=P_{Y}, \bE_f(q)\geq \bE_P(q),\nonumber\\
&\quad \quad  I_f(X_1;Y)\leq R_1, I_f(X_2;Y)\leq R_2 \}.\label{eq: calD dfn}
\end{flalign}
and where $P_{X_1,X_2,Y}=P_{X_1}\times P_{X_2}\times P_{Y|X_1,X_2}$.

\section{The Mismatched Cognitive MAC}\label{sc: The Mismatched Cognitive MAC}

The two-user discrete memoryless cognitive MAC is defined by the input alphabets $\calX_1$, $\calX_2$, output alphabet $\calY$ and conditional transition probability $P_{Y|X_1,X_2}$. 
A block-code of length $n$ for the channel is defined by the two encoding mappings
\begin{flalign}
f_{1,n}&:\; \{1,...,2^{nR_1}\}\rightarrow \calX_1^n\nonumber\\
f_{2,n}&:\; \{1,...,2^{nR_1}\}\times \{1,...,2^{nR_2}\} \rightarrow \calX_2^n,
\end{flalign}
resulting in two codebooks $\{\bx_1(i)\}, i=1,....,2^{nR_1}$ and $\{\bx_2(i,j)\}, i=1,....,2^{nR_1},  j=1,....,2^{nR_2}$.
A mismatched decoder for the cognitive MAC is defined by a mapping 
of the form (\ref{eq: qn additive MAC})
where the decoder outputs the message $(i,j)$ iff for all $(i',j')\neq (i,j)$
\begin{flalign}
q_n(\bx_1(i),\bx_2(i,j),\by)> q_n(\bx_1(i'),\bx_2(i',j'),\by).\label{eq: decision rule cognitive mac}
\end{flalign}
The capacity region of the cognitive mismatched MAC is defined similarly to that of the mismatched MAC.

Denote by $W_1,W_2$ the random messages, and the corresponding outputs of the decoder $\hat{W}_1,\hat{W}_2$. 
It is said that $E\geq 0$ is an achievable error exponent for the MAC if there exists a sequence of codebooks $\calC_{n,1},\calC_{n,2}$, $n\geq 1$ of rates $R_1$ and $R_2$, respectively, such that the average probability of error, $\bar{P}_{e,n}=\mbox{Pr}\{(W_1,W_2)\neq (W_1,W_2)\}$, that is incurred by the decoder $q_n$ when applied to codebooks $\calC_{n,1},\calC_{n,2}$ and the channel output satisfies $\liminf_{n\rightarrow\infty}-\frac{1}{n}\log \bar{P}_{e,n}\geq E$.

Two achievability schemes tailored for the mismatched cognitive MAC are presented next. 
The first encoding scheme is based on constant composition superposition coding, and the second on constant composition random binning. 

{ \bf Codebook Generation of User 1:} 
The codebook of the non-cognitive user is drawn the same way in both  coding methods. 
Fix a distribution $P_{X_1,X_2}\in\calP_n(\calX_1\times\calX_2)$. The codebook of user $1$ is composed of $2^{nR_1}$ codewords $\{\bx_1(i)\}, i=1,...,2^{nR_1}$ drawn  independently, each uniformly over the type-class $T(P_{X_1})$.  

{ \bf Codebook Generation of User 2 and Encoding:}
\begin{itemize} 
\item {\bf Superposition coding}: For each $\bx_1(i)$, user $2$ draws $2^{nR_2}$ codewords $\bx_2(i,j), j=1,...,2^{nR_2}$ conditionally independent given $\bx_1(i)$ uniformly over the conditional type-class $T(P_{X_1,X_2}|\bx_1(i))$. To transmit message $m_1$, encoder $1$ transmits $\bx_1(m_1)$. To transmit message $m_2$, encoder $2$, which is cognizant of the first user's message $m_1$, transmits $\bx_2(m_1,m_2)$.
\item
{\bf Random binning:} User $2$ draws $2^{n(R_2+\gamma)}$ codewords independently, each uniformly over $T(P_{X_2})$ and partitions them into $2^{nR_2}$ bins, i.e., 
$\{\bx_2[k,j]\}$, $k=1,...,2^{n\gamma}$, $j=1,...,2^{nR_2}$.
The quantity $\gamma$ is given by 
\begin{flalign}
\gamma=I_P(X_1;X_2)+\epsilon
\end{flalign}
for an arbitrarily small $\epsilon>0$. 
To transmit message $m_1$, encoder $1$ transmits $\bx_1(m_1)$. To transmit message $m_2$, encoder $2$, which is cognizant of the first user's message $m_1$, looks for a codeword in the $m_2$-th bin, $\bx_2[k,m_2]$ such that $(\bx_1(m_1),\bx_2[k,m_2])\in T(P_{X_1,X_2})$. If more than one such $k$ exists, the encoder chooses one of them arbitrarily, otherwise an error is declared. Thus, the encoding of user $2$ defines a mapping from the pairs of messages $(m_1,m_2)$ to a transmitted codeword $\bx_2$, which is denoted by $\bx_2(m_1,m_2)$, in parentheses, as opposed to the square brackets of $\bx_2[k,m_2]$.
 \end{itemize}
\noindent{ \bf Decoding:} The decoder chooses $(i,j)$ such that $q(\bx_1(i),\bx_2(i,j),\by)$ is maximal according to (\ref{eq: decision rule cognitive mac}), where ties are regarded as errors. 

\vspace{0.2cm}

The resulting achievable error-exponents for the mismatched cognitive MAC using superposition coding are presented next.
Let 
\begin{flalign}\label{eq: Pe1 Pe2 dfn}
 \bar{P}_{e,1}^{sup}= &\mbox{Pr}\left\{\hat{W}_1\neq W_1\right\}\nonumber\\
  \bar{P}_{e,2}^{sup}=&\mbox{Pr}\left\{\hat{W}_1=W_1,\hat{W_2}\neq W_2\right\}
  \end{flalign}  
  when superposition coding is employed.

   Let $Q\in\calP(\calX_1\times\calX_2\times\calY)$ be given. Define the following sets of p.m.f.'s that will be useful in what follows:
\begin{flalign}\label{eq: sets definitions}
&\calK(Q) \triangleq\{f\in\calP(\calX_1\times\calX_2\times\calY):\; f_{X_1,X_2}=Q_{X_1,X_2}\}\nonumber\\
&\calG_q(Q) \triangleq \{f\in\calK(Q):\; \bE_f\{q(X_1,X_2,Y)\}\geq \bE_Q\{q(X_1,X_2,Y)\} \}\nonumber\\
&\calL_1(Q) \triangleq  \{f\in \calG_q(Q) :\; f_{X_2,Y}=Q_{X_2,Y}\} \nonumber\\
&\calL_2(Q) \triangleq  \{f\in \calG_q(Q) :\; f_{X_1,Y}=Q_{X_1,Y}\} \nonumber\\
&\calL_0(Q) \triangleq  \{f\in \calG_q(Q) :\; f_{Y}=Q_{Y}\} .
\end{flalign}
 
\begin{theorem}\label{th: achievable error exponents II cognitive mismatch II} 
Let $P=P_{X_1,X_2}P_{Y|X_1,X_2}$, then \begin{flalign}
 \bar{P}_{e,2}^{sup} \doteq & 2^{-nE_2(P,R_2)}\label{eq: P_e2 expression a} \\
  \bar{P}_{e,1}^{sup} \doteq & 2^{-nE_1(P,R_1,R_2)}\label{eq: P_e2 expression}
 \end{flalign}
 where
 \begin{flalign}\label{eq: exponents dfn}
 &E_2(P,R_2)\nonumber\\
 =&\min_{P'\in \calK(P)}\bigg[ D(P'\| P)+ \min_{\tilde{P}\in \calL_2(P')} \left|I_{\tilde{P}}(X_2;Y|X_1)-R_2\right|^+ \bigg ],\nonumber\\
  &E_1(P,R_1,R_2)\nonumber\\
 =&\min_{P'\in \calK(P)}\bigg[ D(P'\| P)
 + \min_{\tilde{P}\in \calL_0(P')}  \left| I_{\tilde{P}}(X_1;Y)+| I_{\tilde{P}}(X_2;Y|X_1)-R_2|^+ -R_1\right|^+\bigg ].
\end{flalign}
\end{theorem}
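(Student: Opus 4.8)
The plan is to evaluate both probabilities by the method of types, conditioning first on the joint empirical type $P'$ of the transmitted triple and then computing the conditional error probability. Assume without loss of generality that $(W_1,W_2)=(1,1)$. Since $\bx_1(1)$ is uniform on $T(P_{X_1})$ and $\bx_2(1,1)$ is uniform on the conditional type-class $T(P_{X_1,X_2}|\bx_1(1))$, the pair $(\bx_1(1),\bx_2(1,1))$ has fixed type $P_{X_1,X_2}$, so every triple-type $P'=\hat P_{\bx_1(1),\bx_2(1,1),\by}$ that can arise satisfies $P'_{X_1,X_2}=P_{X_1,X_2}$, i.e.\ $P'\in\calK(P)$. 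Because the channel is memoryless, the probability that the output falls in the conditional type-class realizing a given $P'$ is $\doteq 2^{-nD(P'\|P)}$. As there are only polynomially many types, after summing over $P'$ each error probability has the exponential order of its largest summand, which turns the outer sum into $\min_{P'\in\calK(P)}[D(P'\|P)+\cdots]$ and supplies the outer structure of both $E_1$ and $E_2$. It remains to compute, for fixed $P'$, the conditional exponent of each error event.

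For $\bar{P}_{e,2}^{sup}$ the first message is decoded correctly, so the relevant competitors are the pairs $(1,j)$ with $j\neq1$, and to exponential order the event is governed by the pairwise contest in which some such $(1,j)$ beats $(1,1)$ (beating the remaining codewords imposes only higher-order constraints). Conditioning additionally on $(\bx_1(1),\by)$, whose joint type is $P'_{X_1,Y}$, each competitor $\bx_2(1,j)$ is drawn uniformly on $T(P_{X_2|X_1}|\bx_1(1))$, so the probability that it forms joint type $\tilde{P}$ with the fixed pair is $\doteq 2^{-nI_{\tilde{P}}(X_2;Y|X_1)}$; feasibility forces $\tilde{P}_{X_1,X_2}=P_{X_1,X_2}$, $\tilde{P}_{X_1,Y}=P'_{X_1,Y}$, and winning forces $\bE_{\tilde{P}}(q)\geq\bE_{P'}(q)$, i.e.\ $\tilde{P}\in\calL_2(P')$. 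A union over the $\doteq 2^{nR_2}$ competitors with the truncation $\min\{1,2^{nR_2}2^{-nI_{\tilde{P}}(X_2;Y|X_1)}\}\doteq 2^{-n|I_{\tilde{P}}(X_2;Y|X_1)-R_2|^+}$, minimized over $\tilde{P}\in\calL_2(P')$ and combined with $D(P'\|P)$, produces (\ref{eq: P_e2 expression a}).

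The event $\{\hat{W}_1\neq W_1\}$ calls for a nested union bound reflecting the superposition structure. For an incorrect index $i\neq1$, the codeword $\bx_1(i)$ is drawn uniformly on $T(P_{X_1})$ independently of everything, so the probability that it forms joint type $\tilde{P}_{X_1,Y}$ with the fixed $\by$ (of marginal type $P'_Y$) is $\doteq 2^{-nI_{\tilde{P}}(X_1;Y)}$; conditioned on this, the probability that at least one of the $\doteq 2^{nR_2}$ codewords $\bx_2(i,j)$ completes the triple to type $\tilde{P}$ and wins is, exactly as above, $\doteq 2^{-n|I_{\tilde{P}}(X_2;Y|X_1)-R_2|^+}$. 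Hence the per-$i$ probability of a type-$\tilde{P}$ error is $\doteq 2^{-n[I_{\tilde{P}}(X_1;Y)+|I_{\tilde{P}}(X_2;Y|X_1)-R_2|^+]}$. A further truncated union over the $\doteq 2^{nR_1}$ indices $i$ yields $2^{-n|I_{\tilde{P}}(X_1;Y)+|I_{\tilde{P}}(X_2;Y|X_1)-R_2|^+-R_1|^+}$, where the feasible types are those with $\tilde{P}_{X_1,X_2}=P_{X_1,X_2}$, $\tilde{P}_Y=P'_Y$, and $\bE_{\tilde{P}}(q)\geq\bE_{P'}(q)$, i.e.\ $\tilde{P}\in\calL_0(P')$. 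Minimizing over $\tilde{P}\in\calL_0(P')$ and then over $P'\in\calK(P)$ gives (\ref{eq: P_e2 expression}).

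Both statements assert exact exponential equality, so the union bounds above, which deliver only the upper bounds $\bar{P}_{e,k}^{sup}\lexe 2^{-nE_k}$, must be matched by lower bounds of the same exponential order. This is where the main work lies: one must show the dominant error type is genuinely attained with the claimed probability. The standard device is a second-moment / truncated-union-bound argument exploiting the (conditional) pairwise independence of the randomly drawn codewords: when the expected number of confusable codewords of a given type is exponentially large, the probability that at least one exists is $\doteq 1$, and when it is exponentially small that probability is $\doteq$ its expectation — which is precisely what makes the clipping $|\cdot|^+$ tight rather than merely an upper bound. Carrying this two-sided analysis through the nested structure of the type-$1$ event (inner contest over $j$, outer contest over $i$, with the bin-sharing of $\bx_1(i)$ across all $\bx_2(i,j)$ preserving the needed independence) and verifying that the concentration of the output type around $P'$ does not inflate the exponent is the principal obstacle; the remainder is routine type-counting.
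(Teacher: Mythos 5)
Your proposal is correct and follows essentially the same route as the paper: condition on the joint type $P'$ of the transmitted triple (contributing $2^{-nD(P'\|P)}$ over polynomially many types), then evaluate the conditional error via a nested, truncated union over competitor types, yielding exactly the clipped exponents in (\ref{eq: exponents dfn}). The only point worth noting is that what you flag as ``the principal obstacle'' --- matching lower bounds --- is handled in the paper not by a second-moment argument but by the exact identity $\Pr\{\text{at least one success}\}=1-(1-a)^{M}$ for the (conditionally) fully independent codeword draws, combined with the two-sided bound $\tfrac12\min\{1,Ma\}\le 1-(1-a)^{M}\le\min\{1,Ma\}$ (Lemma 1 of \cite{SomekhBaruchMerhav07}), applied once to the inner contest over $j$ and again to the outer contest over $i$; this makes the clipping $|\cdot|^+$ tight in both directions with no further work.
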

The proof of Theorem \ref{th: achievable error exponents II cognitive mismatch II} can be found in Appendix \ref{sc: Outline of the proof of Theorem 2}.
We note that Theorem \ref{th: achievable error exponents II cognitive mismatch II} implies that 
\begin{flalign}
E_{sup}(P,R_1,R_2)=\min\left\{E_2(P,R_2),E_1(P,R_1,R_2) \right\}
\end{flalign}
is the error exponent induced by the superposition coding scheme.

\noindent Define the following functions
\begin{flalign}\label{eq: primes dfn}
 R_1'(P)\triangleq& \min_{\tilde{P}\in \calL_1(P)} I_{\tilde{P}}(X_1;Y,X_2)\nonumber\\
R_2'(P)\triangleq& \min_{\tilde{P}\in \calL_2(P)} I_{\tilde{P}}(X_2;Y|X_1)\nonumber\\
R_1''(P,R_2)\triangleq& \min_{\tilde{P}\in \calL_0(P)}  I_{\tilde{P}}(X_1;Y)+ |I_{\tilde{P}}(X_2;Y|X_1)-R_2|^+ \nonumber\\
R_2''(P,R_1) \triangleq& \min_{\tilde{P}\in \calL_0(P)}  \bigg\{I_{\tilde{P}}(X_2;Y)- I_{\tilde{P}}(X_2;X_1)
 + |I_{\tilde{P}}(X_1;Y,X_2)-R_1|^+\bigg\}
.\end{flalign}

Note that for $E_2(P,R_2),E_1(P,R_1,R_2)$ to be zero we must have $P'=P$. Theorem \ref{th: achievable error exponents II cognitive mismatch II}  therefore implies that the following region is achievable:
\begin{flalign}\label{eq: sup constraints}
\calR_{cog}^{sup}(P)=\left\{ (R_1,R_2):\;\begin{array}{l}
R_2\leq R_2'(P), \\
R_1\leq R_1''(P,R_2) 
\end{array}\right\}.
\end{flalign}

Consider the following region\footnote{Note that for convenience, in the first inequality of (\ref{eq: alternative representation sup MAC}) $ \min_{\tilde{P}\in \calL_2(P)} I_{\tilde{P}}(X_2;Y|X_1)$ is written explicitly instead of the
abbreviated notation $R_2'(P)$.}:
\begin{flalign}
\tilde{\calR}_{cog}^{sup}(P)=\nonumber\\
\bigg\{ (R_1,R_2):\;
&\begin{array}{cc}
R_2\leq   \min_{\tilde{P}\in \calL_2(P)} I_{\tilde{P}}(X_2;Y|X_1), \\
R_1+R_2\leq \min_{\tilde{P}\in\calL_0^{sup}(P)} I(X_1,X_2;Y) 
\end{array}\bigg\},\label{eq: alternative representation sup MAC}
\end{flalign}
where \begin{flalign}
\calL_0^{sup}(P)=&\bigg\{\tilde{P}\in \calL_0(P):\;   I_{\tilde{P}}(X_1;Y)\leq R_1 \bigg\}.
\end{flalign}

The following theorem provides a random coding converse using superposition coding, and also implies the equivalence of $\calR_{cog}^{sup}(P)$ and $\tilde{\calR}_{cog}^{sup}(P)$.
\begin{theorem}\label{cr: first equivalence} (Random Coding Converse for Superposition Coding)
If $(R_1,R_2)\notin\tilde{\calR}_{cog}^{sup}(P_{X_1,X_2,Y})$ then
the average probability of error, averaged over the ensemble
of random codebooks drawn according to $P_{X_1,X_2}$ using superposition coding,
approaches one as the blocklength tends to infinity.
\end{theorem}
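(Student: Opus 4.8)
The plan is to prove this strong-converse-type statement by showing that, whenever $(R_1,R_2)\notin\tilde{\calR}_{cog}^{sup}(P)$, the ensemble-averaged probability of \emph{correct} decoding tends to zero. Since messages are equiprobable and the generation of the ensemble is symmetric, I may condition on $(W_1,W_2)=(1,1)$ throughout. Correct decoding requires simultaneously that (a) no competitor $(\bx_1(i),\bx_2(i,j))$ with $i\neq 1$ has metric at least that of the transmitted pair, and (b) no competitor $(\bx_1(1),\bx_2(1,j))$ with $j\neq 1$ has metric at least that of the transmitted pair (recall that ties are declared errors). I will show that the violated inequality forces one of these two events to fail with probability approaching one, which suffices. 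A preliminary step common to both cases is a concentration-of-types argument: because the codewords have constant composition and the channel is memoryless, the joint type $\hat{P}_{\bx_1(1),\bx_2(1,1),\by}$ lies within any fixed neighborhood of $P$ with probability tending to one, and on this typical event the threshold metric $q_n(\bx_1(1),\bx_2(1,1),\by)$ is within $\delta$ of $\bE_P\{q\}$ while $\hat{P}_{\bx_1(1),\by}$ and $\hat{P}_{\by}$ are within $\delta$ of $P_{X_1,Y}$ and $P_Y$. I carry out the remaining estimates conditioned on this event; since it has probability $\to 1$, the conclusions transfer to the unconditional ensemble.

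\emph{Case 1: $R_2 > R_2'(P)=\min_{\tilde{P}\in\calL_2(P)}I_{\tilde{P}}(X_2;Y|X_1)$.} Here I analyze event (b). Conditioned on $\bx_1(1)$ and $\by$, the $2^{nR_2}-1$ codewords $\bx_2(1,j)$, $j\neq 1$, are independent and uniform over $T(P_{X_1,X_2}|\bx_1(1))$. By the method of types, each has metric at least the threshold with probability $\doteq 2^{-nR_2'(P)}$, the exponent being the minimal $I_{\tilde{P}}(X_2;Y|X_1)$ over $\tilde{P}\in\calL_2(P)$ --- precisely the quantity governing $E_2$ in Theorem \ref{th: achievable error exponents II cognitive mismatch II}. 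Because these indicators are i.i.d., the probability that none is confusable is at most $\left(1-2^{-n(R_2'(P)+\delta)}\right)^{2^{nR_2}-1}$, which tends to zero once $R_2>R_2'(P)$; hence event (b) fails with probability $\to 1$.

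\emph{Case 2: $R_1+R_2>\min_{\tilde{P}\in\calL_0^{sup}(P)}I_{\tilde{P}}(X_1,X_2;Y)$, in the complementary regime $R_2\leq R_2'(P)$.} Here I analyze event (a) by a two-level count that respects the superposition dependency. For a candidate type $\tilde{P}\in\calL_0(P)$, the number of indices $i\neq 1$ with $\hat{P}_{\bx_1(i),\by}=\tilde{P}_{X_1,Y}$ concentrates around $2^{n(R_1-I_{\tilde{P}}(X_1;Y))}$; such indices exist with probability $\to 1$ only when $I_{\tilde{P}}(X_1;Y)\leq R_1$, which is exactly the defining restriction of $\calL_0^{sup}(P)$. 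For each such index the $2^{nR_2}$ codewords $\bx_2(i,j)$ are freshly drawn, so the indicator $Z_i$ that some $\bx_2(i,j)$ produces a confusable triple of type close to $\tilde{P}$ satisfies $\Pr(Z_i=1)\doteq 2^{-n|I_{\tilde{P}}(X_2;Y|X_1)-R_2|^+}$. Crucially, the $Z_i$ are independent across distinct indices $i$, since they involve disjoint freshly drawn codewords, so the number of confusable indices is a sum of independent Bernoulli variables with mean $\doteq 2^{n(R_1-I_{\tilde{P}}(X_1;Y)-|I_{\tilde{P}}(X_2;Y|X_1)-R_2|^+)}$. Optimizing $\tilde{P}$ over $\calL_0(P)$, this mean diverges exactly when $R_1>R_1''(P,R_2)$, whence event (a) fails with probability $\to 1$. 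The identity that "$R_1>R_1''(P,R_2)$ under $R_2\leq R_2'(P)$" coincides with "$R_1+R_2>\min_{\calL_0^{sup}(P)}I(X_1,X_2;Y)$" is the region-manipulation statement that simultaneously yields the claimed equivalence of $\calR_{cog}^{sup}(P)$ and $\tilde{\calR}_{cog}^{sup}(P)$; I would isolate it as a separate optimization lemma (the content of Appendix \ref{sc: first equivalence}).

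The main obstacle is Case 2, specifically controlling the correlation induced by superposition coding. A naive first-moment count over confusable pairs $(i,j)$ would use $\min_{\calL_0(P)}I(X_1,X_2;Y)$ and entirely miss the side constraint $I_{\tilde{P}}(X_1;Y)\leq R_1$; the sharper bound emerges only from the two-level structure, in which the outer layer (independent across $i$) both enforces realizability of the $\bx_1$-type --- hence the passage from $\calL_0(P)$ to $\calL_0^{sup}(P)$ --- and supplies the genuine independence needed to turn a diverging mean into a probability tending to one, something a plain second-moment bound on the correlated pair-count would not deliver. The remaining work, namely making the $\doteq$ estimates uniform over the polynomially many types and transferring them off the conditioning typical event, is routine but must be discharged to complete the argument.
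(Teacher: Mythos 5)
Your probabilistic machinery is essentially the paper's. The two-level count in your Case 2 --- outer layer independent across $i\neq 1$, inner layer of $2^{nR_2}$ fresh conditionally drawn $\bx_2(i,j)$'s, giving $\Pr(Z_i=1)\doteq 2^{-n\left(I_{\tilde{P}}(X_1;Y)+|I_{\tilde{P}}(X_2;Y|X_1)-R_2|^+\right)}$ and hence a product bound $\left(1-\min\left\{2^{-nI_{\tilde{P}}(X_1;Y)},2^{-n[I_{\tilde{P}}(X_1,X_2;Y)-R_2]}\right\}\right)^{M_1}\to 0$ --- is exactly the estimate the paper recycles from its achievability analysis (its Step 3), and your typicality conditioning on the transmitted triple matches its Steps 2 and 4. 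Your Case 1 is the part the paper declares "simpler and thus omitted."

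The gap is the step you defer to a "separate optimization lemma." Your argument establishes that the error probability tends to one outside $\calR_{cog}^{sup}(P)$ (the $R_2\leq R_2'(P)$, $R_1\leq R_1''(P,R_2)$ description), whereas the theorem is stated for the complement of $\tilde{\calR}_{cog}^{sup}(P)$ (the sum-rate description with the side constraint $I_{\tilde{P}}(X_1;Y)\leq R_1$ inside $\calL_0^{sup}(P)$). To bridge them you need the inclusion $\calR_{cog}^{sup}(P)\subseteq\tilde{\calR}_{cog}^{sup}(P)$, and this is precisely the direction that the paper does \emph{not} prove algebraically: it obtains it as a corollary \emph{of} the converse theorem together with achievability of $\calR_{cog}^{sup}(P)$ (the appendix only proves the reverse inclusion $\tilde{\calR}_{cog}^{sup}(P)\subseteq\calR_{cog}^{sup}(P)$ directly). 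So invoking that equivalence here would be circular, and a direct algebraic attempt runs into the degenerate case of a minimizer $\tilde{P}\in\calL_0(P)$ with $I_{\tilde{P}}(X_1;Y)=R_1$ and $I_{\tilde{P}}(X_2;Y|X_1)<R_2$, where $R_1\leq R_1''(P,R_2)$ can hold while the sum-rate bound fails. The paper avoids all of this by working with the sum-rate form from the outset: its Step 1 uses convexity of $\Psi(R_1,R_2,P)=\min_{\tilde{P}\in\calL_0^{sup}(P)}I_{\tilde{P}}(X_1,X_2;Y)$ (Lapidoth's argument) to extract from the single violated inequality one distribution $\tilde{P}$ satisfying \emph{strictly} $R_1>I_{\tilde{P}}(X_1;Y)$, $R_1+R_2>I_{\tilde{P}}(X_1,X_2;Y)$ and $\bE_{\tilde{P}}\{q\}>\bE_P\{q\}$ simultaneously, after which your (and its) probabilistic estimate applies. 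You need to supply this perturbation/convexity step; once you do, your case split becomes unnecessary and the proof collapses onto the paper's.
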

The proof of Theorem \ref{cr: first equivalence} appears in Appendix \ref{sc: first equivalence}. It follows similarly to the proof of Theorem 3 of \cite{Lapidoth96}
\begin{corollary}\label{cr: first corollary equivalence superposition}
\begin{flalign}
\calR_{cog}^{sup}(P)=\tilde{\calR}_{cog}^{sup}(P).
\end{flalign}
\end{corollary}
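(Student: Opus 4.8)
The plan is to prove the set equality $\calR_{cog}^{sup}(P)=\tilde{\calR}_{cog}^{sup}(P)$ by establishing containment in both directions, exploiting the fact (already guaranteed by Theorem \ref{th: achievable error exponents II cognitive mismatch II} and the definitions in (\ref{eq: primes dfn}), (\ref{eq: sup constraints}), (\ref{eq: alternative representation sup MAC})) that both regions share the identical first constraint $R_2\le \min_{\tilde P\in\calL_2(P)} I_{\tilde P}(X_2;Y|X_1)=R_2'(P)$. Hence the whole argument reduces to showing that, for any pair $(R_1,R_2)$ already satisfying this common bound on $R_2$, the second constraint of $\calR_{cog}^{sup}(P)$, namely $R_1\le R_1''(P,R_2)$, is equivalent to the second constraint of $\tilde{\calR}_{cog}^{sup}(P)$, namely $R_1+R_2\le \min_{\tilde P\in\calL_0^{sup}(P)} I_{\tilde P}(X_1,X_2;Y)$. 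First I would unfold $R_1''(P,R_2)=\min_{\tilde P\in\calL_0(P)}\big[I_{\tilde P}(X_1;Y)+|I_{\tilde P}(X_2;Y|X_1)-R_2|^+\big]$ and analyze the minimization by splitting the feasible set $\calL_0(P)$ according to the sign of $I_{\tilde P}(X_2;Y|X_1)-R_2$.

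Next I would handle the two cases induced by the positive-part operator. On the subset of $\calL_0(P)$ where $I_{\tilde P}(X_2;Y|X_1)\le R_2$, the inner objective collapses to $I_{\tilde P}(X_1;Y)$, so the constraint $R_1\le R_1''(P,R_2)$ becomes a bound $R_1\le I_{\tilde P}(X_1;Y)$; on the complementary subset where $I_{\tilde P}(X_2;Y|X_1)>R_2$, the objective becomes $I_{\tilde P}(X_1;Y)+I_{\tilde P}(X_2;Y|X_1)-R_2=I_{\tilde P}(X_1,X_2;Y)-R_2$ by the chain rule $I_{\tilde P}(X_1;Y)+I_{\tilde P}(X_2;Y|X_1)=I_{\tilde P}(X_1,X_2;Y)$, turning the constraint into $R_1+R_2\le I_{\tilde P}(X_1,X_2;Y)$. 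The key observation is that in the first case the additional requirement $I_{\tilde P}(X_1;Y)\le R_1$ that defines $\calL_0^{sup}(P)$ is exactly what makes the binding constraint $R_1+R_2\le I_{\tilde P}(X_1,X_2;Y)$ active while rendering the pure $R_1\le I_{\tilde P}(X_1;Y)$ bound non-restrictive once $R_2\le R_2'(P)$ is assumed. I would show that minimizers placing weight in the region $I_{\tilde P}(X_1;Y)>R_1$ cannot tighten the $R_1''$ bound below what $\calL_0^{sup}(P)$ already gives, so the minimization in $R_1''(P,R_2)$ can be restricted to $\calL_0^{sup}(P)$ without loss.

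The cleanest route is therefore to prove the equivalence of the two second-constraint inequalities directly: I would argue that $(R_1,R_2)$ with $R_2\le R_2'(P)$ satisfies $R_1\le R_1''(P,R_2)$ if and only if it satisfies $R_1+R_2\le \min_{\tilde P\in\calL_0^{sup}(P)} I_{\tilde P}(X_1,X_2;Y)$, by tracking how the positive-part split interacts with the side constraint $I_{\tilde P}(X_1;Y)\le R_1$. One direction uses that any $\tilde P\in\calL_0^{sup}(P)$ with $I_{\tilde P}(X_2;Y|X_1)>R_2$ contributes the term $I_{\tilde P}(X_1,X_2;Y)-R_2$ to $R_1''$, so a violation of the sum-rate bound immediately forces $R_1>R_1''(P,R_2)$; the reverse direction uses the case analysis above to show no feasible $\tilde P$ can violate $R_1\le R_1''$ without witnessing a violation of the sum-rate bound in $\tilde{\calR}_{cog}^{sup}(P)$.

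The main obstacle I anticipate is the bookkeeping around the $|\cdot|^+$ operator together with the implicit dependence of $\calL_0^{sup}(P)$ on $R_1$, which makes the second region's constraint self-referential (the feasible set over which the minimum is taken depends on the very rate being bounded). I would resolve this by noting that Theorem \ref{cr: first equivalence} already supplies the converse half: since the random-coding ensemble fails whenever $(R_1,R_2)\notin\tilde{\calR}_{cog}^{sup}(P)$, every achievable pair must lie in $\tilde{\calR}_{cog}^{sup}(P)$, giving $\calR_{cog}^{sup}(P)\subseteq\tilde{\calR}_{cog}^{sup}(P)$; and the achievability in Theorem \ref{th: achievable error exponents II cognitive mismatch II}, expressed through (\ref{eq: sup constraints}), gives the reverse inclusion $\tilde{\calR}_{cog}^{sup}(P)\subseteq\calR_{cog}^{sup}(P)$ once the case analysis confirms the two analytic descriptions coincide. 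Combining the two inclusions yields the asserted equality, so the corollary follows directly from Theorems \ref{th: achievable error exponents II cognitive mismatch II} and \ref{cr: first equivalence} together with the elementary chain-rule and positive-part manipulations sketched above.
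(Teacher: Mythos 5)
Your proposal is correct and takes essentially the same route as the paper: the inclusion $\calR_{cog}^{sup}(P)\subseteq\tilde{\calR}_{cog}^{sup}(P)$ is delegated to the random coding converse (Theorem \ref{cr: first equivalence}) together with achievability, while the reverse inclusion is obtained analytically by noting that any $\tilde{P}\in\calL_0(P)$ with $I_{\tilde{P}}(X_1;Y)>R_1$ contributes an objective exceeding $R_1$ (so the minimization in $R_1''(P,R_2)$ may be restricted to $\calL_0^{sup}(P)$) and then applying $|t|^+\geq t$ with the chain rule, which is exactly the argument of Appendix \ref{sc: proof of Corollary equivalence 1}. One caution: the stronger claim in your middle paragraphs that the two second constraints are \emph{equivalent} by pure case analysis overreaches, since the direction $R_1\leq R_1''(P,R_2)\Rightarrow R_1+R_2\leq\min_{\tilde{P}\in\calL_0^{sup}(P)}I_{\tilde{P}}(X_1,X_2;Y)$ does not follow from the sketched cases alone (a $\tilde{P}\in\calL_0(P)$ with $I_{\tilde{P}}(X_1;Y)=R_1$ and $I_{\tilde{P}}(X_2;Y|X_1)<R_2$ gives $R_1''\leq R_1$ without the strict violation you need), which is precisely why that direction must come from Theorem \ref{cr: first equivalence}, as you ultimately arrange.
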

The inclusion $\calR_{cog}^{sup}(P)\subseteq \tilde{\calR}_{cog}^{sup}(P)$, follows from Theorem \ref{cr: first equivalence} and since $ \calR_{cog}^{sup}(P)$ is an achievable region. The proof of the opposite direction $\tilde{\calR}_{cog}^{sup}(P)\subseteq \calR_{cog}^{sup}(P)$ appears in Appendix \ref{sc: proof of Corollary equivalence 1}.

Since by definition the capacity region is a closed convex set, this yields the following achievability theorem. 
\begin{theorem}\label{th: achievable II cognitive mismatch II}
The capacity region of the finite alphabet cognitive MAC $P_{Y|X_1,X_2}$ with decoding metric $q(x_1,x_2,y)$ contains 
the set of rate-pairs 
\begin{flalign}
\calR_{cog}^{sup}=\mbox{ closure of CH of }\underset{P}{\cup} \tilde{\calR}_{cog}^{sup}(P)
\end{flalign}
where the union is over all $P\in\calP(\calX_1\times\calX_2\times\calY)$ with conditional $P_{Y|X_1,X_2}$ given by the channel. 
\end{theorem}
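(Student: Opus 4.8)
The plan is to assemble Theorem~\ref{th: achievable II cognitive mismatch II} from the pieces already established, so the work is mostly bookkeeping about how achievability under a fixed input distribution upgrades to the closed-convex-hull region. First I would recall that for each fixed $P=P_{X_1,X_2}P_{Y|X_1,X_2}$ with composition in $\calP_n(\calX_1\times\calX_2)$, Theorem~\ref{th: achievable error exponents II cognitive mismatch II} guarantees that the superposition ensemble yields error exponents $E_2(P,R_2)$ and $E_1(P,R_1,R_2)$ that are strictly positive whenever the rate pair lies strictly inside $\calR_{cog}^{sup}(P)$; since a positive exponent forces the ensemble-average error probability to vanish, there exists a deterministic codebook achieving arbitrarily small error, so every rate pair in the interior of $\calR_{cog}^{sup}(P)$ is achievable. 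By Corollary~\ref{cr: first corollary equivalence superposition} we may replace $\calR_{cog}^{sup}(P)$ by $\tilde{\calR}_{cog}^{sup}(P)$ throughout.

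Next I would handle the passage from a single $P$ to the union over all $P$ and to its convex hull. Fixing an arbitrary rate pair $(R_1,R_2)$ in the interior of $\mbox{CH}\,\underset{P}{\cup}\,\tilde{\calR}_{cog}^{sup}(P)$, Carath\'eodory's theorem lets me write it as a finite convex combination $\sum_{\ell} \lambda_\ell (R_1^{(\ell)},R_2^{(\ell)})$ of points, each interior to some $\tilde{\calR}_{cog}^{sup}(P^{(\ell)})$. I would then invoke a standard time-sharing argument: partition the block of length $n$ into sub-blocks of lengths $\approx \lambda_\ell n$, on the $\ell$-th sub-block use a superposition code designed for $P^{(\ell)}$ at rates $(R_1^{(\ell)},R_2^{(\ell)})$, and note that the additive structure of the metric $q_n$ in (\ref{eq: qn additive MAC}) makes the decoder decompose compatibly across sub-blocks, so the overall error probability is bounded by the sum of the per-sub-block error probabilities, each of which vanishes. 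A rational-approximation step is needed to make the $\lambda_\ell$ compatible with integer block lengths, and a continuity/density argument (using that empirical compositions $\calP_n$ approximate any $P\in\calP$ and that the functionals in (\ref{eq: sets definitions})--(\ref{eq: alternative representation sup MAC}) are continuous in the underlying distribution) lets me approximate an arbitrary $P$ by one with an $n$-admissible composition.

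Finally I would close the region: since the set of achievable rate pairs is by definition closed and since I have shown the interior of $\mbox{CH}\,\underset{P}{\cup}\,\tilde{\calR}_{cog}^{sup}(P)$ is achievable, taking closures gives achievability of the full region $\calR_{cog}^{sup}$, as claimed. The only subtlety worth flagging is the justification that the capacity region is convex and closed, which the paper asserts by definition, so I would simply cite that remark rather than re-prove it.

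I expect the main obstacle to be the time-sharing/convexification step rather than the single-letter achievability, which is immediate from Theorem~\ref{th: achievable error exponents II cognitive mismatch II}. Concretely, one must verify that concatenating superposition codebooks built for different compositions $P^{(\ell)}$ still produces a valid cognitive-MAC code under the decision rule (\ref{eq: decision rule cognitive mac}), and that the union bound over sub-blocks together with the vanishing per-sub-block exponents indeed drives $\bar{P}_{e,n}\to 0$; keeping the auxiliary binning/superposition structure consistent across sub-blocks while the composition varies is the delicate part. Once that is in place, the continuity and Carath\'eodory arguments are routine.
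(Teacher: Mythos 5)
Your proposal is correct and follows essentially the same route as the paper, which compresses the entire argument into the single remark that the capacity region is by definition a closed convex set (so that achievability of each $\tilde{\calR}_{cog}^{sup}(P)=\calR_{cog}^{sup}(P)$, guaranteed by the positive exponents of Theorem~\ref{th: achievable error exponents II cognitive mismatch II} and Corollary~\ref{cr: first corollary equivalence superposition}, immediately yields the closed convex hull). Your expansion --- positive exponents for fixed $P$, time-sharing justified by the additive metric decomposing across sub-blocks, Carath\'eodory, density of $\calP_n$ in $\calP$, and taking closures --- is exactly the content the paper leaves implicit.
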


The error exponents achievable by random binning are presented next. 
Let  $\bar{P}_{e,2}^{bin},\bar{P}_{e,1}^{bin}$ be defined as follows
\begin{flalign}\label{eq: Pe1 Pe2 binning dfn}
 \bar{P}_{e,1}^{bin}= &\mbox{Pr}\left\{\hat{W}_1\neq W_1\right\}\nonumber\\
  \bar{P}_{e,2}^{bin}=&\mbox{Pr}\left\{\hat{W}_1=W_1,\hat{W_2}\neq W_2\right\}
  \end{flalign}  
when random binning is employed. 
\begin{theorem}\label{th: achievable error exponents II cognitive mismatch II binning} 
Let $P=P_{X_1,X_2}P_{Y|X_1,X_2}$, then \begin{flalign}\
 \bar{P}_{e,2}^{bin} \doteq & 2^{-nE_2(P,R_2)}\nonumber\\
  \bar{P}_{e,1}^{bin} \doteq & 2^{-n\min\{E_0(P,R_1,R_2),E_1(P,R_1)\}}
  \end{flalign}
 where 
 \begin{flalign}\label{eq: E0bPR1R1 dfn}
  &E_1(P,R_1)\nonumber\\
 &=\min_{P'\in\calK(P)}\bigg[ D(P'\| P)+  \min_{\tilde{P}\in \calL_1(P')}  | I_{\tilde{P}}(X_1;Y,X_2)-R_1|^+ \bigg ]\nonumber\\
&E_0(P,R_1,R_2)=\max\{E_1(P,R_1,R_2), E_{0,b}(P,R_1,R_2)\}
  \end{flalign}
and where 
 \begin{flalign}
&E_{0,b}(P,R_1,R_2)\nonumber\\
 =&\min_{P'\in\calK(P)}\bigg[ D(P'\| P)+  \min_{\tilde{P}\in \calL_0(P')} \bigg | I_{\tilde{P}}(X_2;Y)-I_P(X_1;X_2)\nonumber\\
 & \quad\quad\quad \quad\quad\quad +| I_{\tilde{P}}(X_1;Y,X_2)-R_1|^+ -R_2\bigg|^+\bigg ],
\end{flalign}
with $E_2(P,R_2)$ and $E_1(P,R_1,R_2)$ defined in (\ref{eq: exponents dfn}).
\end{theorem}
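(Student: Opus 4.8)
The plan is to adapt the method-of-types error-exponent analysis used for Theorem \ref{th: achievable error exponents II cognitive mismatch II} to the binning ensemble, treating $\bar P_{e,2}^{bin}$ and $\bar P_{e,1}^{bin}$ separately. Throughout I condition on $(W_1,W_2)=(1,1)$, on the transmitted pair $(\bx_1(1),\bx_2(1,1))$, and on $\by$, and I first group realizations by the joint type $P'$ of $(\bx_1(1),\bx_2(1,1),\by)$. Since $\bx_2(1,1)$ is selected to be jointly typical with $\bx_1(1)$, one has $(\bx_1(1),\bx_2(1,1))\in T(P_{X_1,X_2})$, so $P'\in\calK(P)$ and the type-$P'$ event contributes the familiar factor $2^{-nD(P'\|P)}$. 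Because $q_n$ is additive, every comparison in (\ref{eq: decision rule cognitive mac}) depends only on empirical joint types, so a competitor of type $\tilde P$ beats the true codewords iff $\bE_{\tilde P}\{q\}\ge\bE_{P'}\{q\}$, i.e. iff $\tilde P\in\calG_q(P')$; the extra marginal constraints defining $\calL_0,\calL_1,\calL_2$ then record which coordinates are pinned by the fixed true codewords and output. A preliminary step disposes of the binning encoding-error event — the event that some bin contains no codeword jointly typical with $\bx_1(m_1)$ — whose probability is doubly-exponentially small because each bin holds $2^{n\gamma}$ codewords with $\gamma=I_P(X_1;X_2)+\epsilon$, so it affects no exponent.

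For $\bar P_{e,2}^{bin}$ I argue that the analysis collapses to that of superposition coding. Conditioned on $\hat W_1=W_1=1$, the only competitors are the representatives $\bx_2(1,j')$, $j'\ne1$, each of which is (asymptotically) uniform over $T(P_{X_1,X_2}|\bx_1(1))$ and, since the bins partition disjoint pools, independent across $j'$. This is exactly the conditional ensemble analyzed for user $2$ in Theorem \ref{th: achievable error exponents II cognitive mismatch II}, so the same computation yields $\bar P_{e,2}^{bin}\doteq2^{-nE_2(P,R_2)}$, with $E_2$ as in (\ref{eq: exponents dfn}).

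The substance of the proof is the type-$1$ event $\hat W_1=i'\ne1$, where the competing pair is $(\bx_1(i'),\bx_2(i',j'))$, and I split according to the relation between the competitor's second codeword and the transmitted $\bx_2(1,1)$. When the competitor essentially reuses $\bx_2(1,1)$ — so that $\tilde P_{X_2,Y}=P'_{X_2,Y}$, i.e. $\tilde P\in\calL_1(P')$ — an error requires only a wrong $\bx_1(i')$ jointly typical with the fixed pair $(\bx_2(1,1),\by)$; there are about $2^{nR_1}$ candidates, each typical with probability $\doteq2^{-nI_{\tilde P}(X_1;Y,X_2)}$, giving the contribution $2^{-n|I_{\tilde P}(X_1;Y,X_2)-R_1|^+}$ and, after the $D(P'\|P)$ bookkeeping and minimization, the exponent $E_1(P,R_1)$. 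When the competitor uses a genuinely different second codeword ($\tilde P\in\calL_0(P')$), I bound the probability of a bad competitor of type $\tilde P$ by two distinct union-bound decompositions and keep the better one. The first is the superposition-style per-message count: for each index $i'$ the representatives $\{\bx_2(i',j')\}_{j'}$ behave like a conditional constant-composition code, reproducing $E_1(P,R_1,R_2)$ of (\ref{eq: exponents dfn}). The second exploits the binning structure: since all $2^{n(R_2+\gamma)}$ second-user codewords are drawn from the common pool $T(P_{X_2})$, I count the pool codewords $\bx_2$ with $(\bx_2,\by)\in T(\tilde P_{X_2,Y})$ — about $2^{n(R_2+I_P(X_1;X_2)-I_{\tilde P}(X_2;Y))}$ of them — and weight each by the probability $\doteq2^{-n|I_{\tilde P}(X_1;Y,X_2)-R_1|^+}$ that some wrong $\bx_1(i')$ is jointly typical with it and selects it; the resulting expected number of bad pool codewords has exponent matching the bracket in $E_{0,b}(P,R_1,R_2)$. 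As both decompositions upper-bound the same event, the fresh-codeword exponent is their maximum, $E_0(P,R_1,R_2)=\max\{E_1(P,R_1,R_2),E_{0,b}(P,R_1,R_2)\}$; combining this with the $\calL_1$ mechanism (a union of two error events whose probability is of the order of the larger summand) gives $\bar P_{e,1}^{bin}\doteq2^{-n\min\{E_0(P,R_1,R_2),E_1(P,R_1)\}}$.

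It remains to show these first-moment bounds are exponentially tight, i.e. to supply matching lower bounds on $\bar P_{e,1}^{bin}$ and $\bar P_{e,2}^{bin}$. For each event I would isolate a dominant competing type $\tilde P^\star$ attaining the relevant minimum and show, via a second-moment (Chebyshev or de Caen–type) argument, that a bad competitor of type $\tilde P^\star$ exists with probability of the claimed exponential order. This is the main obstacle and the point where binning departs most from superposition coding: the representatives $\bx_2(i',j')$ are not independent across $i'$, both because they are drawn from the shared pool and because, when several pool codewords in a bin are jointly typical with $\bx_1(i')$, the selected one is arbitrary. I would control this by verifying that the pairwise correlations entering the second moment are exponentially negligible — essentially because two distinct competitors rarely share the same pool codeword — and that replacing the arbitrary tie-breaking rule by any fixed rule changes the relevant counts by at most a sub-exponential factor, so that the second moment matches the square of the first and the stated exponents are attained.
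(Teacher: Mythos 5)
Your proposal matches the paper's proof essentially step for step: the same Gelfand--Pinsker-style preliminary showing the binning encoding failure is doubly exponentially rare (so the selected representative is effectively uniform on the conditional type class), the same reduction of $\bar{P}_{e,2}^{bin}$ to the superposition analysis, and the same split of the type-$1$ error into the event that the competitor reuses the transmitted $\bx_2[1,1]$ (yielding $E_1(P,R_1)$ via the $\calL_1$ constraint) and the fresh-codeword event whose exponent is the maximum of the per-message and pool-counting bounds, i.e.\ $\max\{E_1(P,R_1,R_2),E_{0,b}(P,R_1,R_2)\}$. The only real divergence is that the paper obtains the exponentially tight two-sided characterization of the fresh-codeword case by invoking Lemma 3 of \cite{ScarlettFabregas2012} rather than carrying out the second-moment argument you sketch, so your plan simply re-proves that lemma in place.
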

The proof of Theorem \ref{th: achievable error exponents II cognitive mismatch II binning}  appears in Appendix \ref{sc:Proof of Theorem 3}. The derivation of  the exponent associated with $\bar{P}_{e,1}^{bin}$ makes use of \cite[Lemma 3]{ScarlettFabregas2012}, where
achievable error exponents for the non-cognitive MAC obtained by a constant-composition random coding are characterized. 
We note that Theorem \ref{th: achievable error exponents II cognitive mismatch II binning}  implies that 
\begin{flalign}
E_{bin}(P,R_1,R_2)=\min\left\{E_2(P,R_2),E_0(P,R_1,R_2),E_1(P,R_1) \right\}
\end{flalign}
is the error exponent induced by the random binning scheme. 
Theorem \ref{th: achievable error exponents II cognitive mismatch II binning} also implies that for fixed $P=P_{X_1,X_2}P_{Y|X_1,X_2}$, the following rate-region is achievable: 
\begin{flalign}\label{eq: bin constraints}
\calR_{cog}^{bin}(P)=\left\{ (R_1,R_2):\;\begin{array}{l}
R_1\leq R_1'(P),\\
R_2\leq R_2'(P), \\
R_1\leq R_1''(P,R_2)\mbox{ or }  R_2 \leq R_2''(P,R_1)
\end{array}\right\},
\end{flalign}
where $R_1'(P),R_2'(P),R_1''(P,R_2),R_2''(P,R_1)$ are defined in (\ref{eq: primes dfn}).
Next it is proven that $\calR_{cog}^{bin}(P)$ has the following alternative expression. Consider the rate-region:
\begin{flalign}\label{cr: equivalent representation bin}
&\tilde{\calR}_{cog}^{bin}(P)=\nonumber\\
&\left\{ (R_1,R_2):\;
\begin{array}{l}
R_1\leq  \min_{\tilde{P}\in \calL_1(P)} I_{\tilde{P}}(X_1;Y,X_2),\\
R_2\leq   \min_{\tilde{P}\in \calL_2(P)} I_{\tilde{P}}(X_2;Y|X_1), \\
R_1+R_2\leq \min_{\tilde{P}\in \calL_0^{bin}(P)}I_{\tilde{P}}(X_1,X_2;Y)
\end{array}\right\}
\end{flalign}
where \begin{flalign}
\calL_0^{bin}(P)=&\bigg\{\tilde{P}\in \calL_0(P):\;  \nonumber\\
& I_{\tilde{P}}(X_1;Y)\leq R_1, I_{\tilde{P}}(X_2;Y)- I_{P}(X_2;X_1)\leq R_2 \bigg\}
\end{flalign}
The following theorem provides a random coding converse for random binning and it also implies that $\tilde{\calR}_{cog}^{bin}(P)$ is an alternative expression for $ \calR_{cog}^{bin}(P)$.
\begin{theorem}\label{cr: second equivalence} (Random Coding Converse for Random Binning)
If $(R_1,R_2)\notin \tilde{\calR}_{cog}^{bin}(P_{X_1,X_2,Y})$ then
the average probability of error, averaged over the ensemble
of random codebooks drawn according to $P_{X_1,X_2}$ using binning, 
approaches one as the blocklength tends to infinity.
\end{theorem}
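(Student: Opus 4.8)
The plan is to prove the contrapositive directly: assuming $(R_1,R_2)\notin\tilde{\calR}_{cog}^{bin}(P)$, I would show that the probability of \emph{correct} decoding, averaged over the binning ensemble, tends to $0$. Since ties are counted as errors, it suffices to exhibit, with probability approaching one, at least one competing message pair $(i',j')\ne(W_1,W_2)$ whose metric weakly exceeds that of the transmitted pair. First I would establish a reference event $\calT$ on which the transmitted triple $(\bX_1,\bX_2,\bY)$, with $\bX_1=\bx_1(W_1)$ and $\bX_2=\bx_2(W_1,W_2)$, has joint type within $\delta$ of $P$; by the law of large numbers for the memoryless channel, the fixed composition of the codewords, and because $\gamma=I_P(X_1;X_2)+\epsilon$ guarantees the binning encoder succeeds, $\Pr(\calT)\to1$. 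On $\calT$ the transmitted metric is $\approx\bE_P\{q\}$, so any competitor whose joint type $\tilde P$ lies in $\calG_q(P)$ (hence $\bE_{\tilde P}\{q\}\ge\bE_P\{q\}$) ties or beats it, up to a $\delta$-slack removed by letting $\delta\to0$ after $n\to\infty$.

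Being outside the region means at least one of the three defining inequalities fails, and I would treat the cases separately, producing a confusable competitor in each. \textbf{Case A} ($R_2>\min_{\tilde P\in\calL_2(P)}I_{\tilde P}(X_2;Y|X_1)$): conditioned on $\bX_1=\bx_1$, the effective codebook $\{\bx_2(W_1,j')\}_{j'}$ is, for small $\epsilon$, statistically close to an i.i.d.\ ensemble uniform on $T(P_{X_1,X_2}\,|\,\bx_1)$ — exactly the superposition structure — so the number of bins $j'\ne W_2$ whose selected codeword realizes the dominant type $\tilde P_2^*\in\calL_2(P)$ has mean $\doteq 2^{\,n(R_2-I_{\tilde P_2^*}(X_2;Y|X_1))}\to\infty$. \textbf{Case B} ($R_1>\min_{\tilde P\in\calL_1(P)}I_{\tilde P}(X_1;Y,X_2)$): the confusable competitor can reuse the transmitted $\bX_2$, the only codeword carrying the true $X_2$--$Y$ dependence demanded by $\tilde P_{X_2,Y}=P_{X_2,Y}$, paired with a wrong $\bx_1(i')$; since the $\{\bx_1(i')\}$ are i.i.d.\ uniform on $T(P_{X_1})$ and independent of $(\bX_2,\bY)$, the number of $i'$ with joint type $\tilde P_1^*\in\calL_1(P)$ relative to $(\bX_2,\bY)$ has mean $\doteq 2^{\,n(R_1-I_{\tilde P_1^*}(X_1;Y,X_2))}\to\infty$. \textbf{Case C} (the sum-rate inequality fails): the competitors have $i'\ne W_1$ paired with a fresh second-user codeword and target type $\tilde P_0^*$ attaining $\min_{\tilde P\in\calL_0^{bin}(P)}I_{\tilde P}(X_1,X_2;Y)$, so the expected number of such pairs is $\doteq 2^{\,n(R_1+R_2-I_{\tilde P_0^*}(X_1,X_2;Y))}\to\infty$; crucially, the two marginal constraints defining $\calL_0^{bin}(P)$, namely $I_{\tilde P_0^*}(X_1;Y)\le R_1$ and $I_{\tilde P_0^*}(X_2;Y)-I_P(X_1;X_2)\le R_2$, are exactly the conditions that there are enough first-user codewords of the right $X_1$--$Y$ type and enough of the $2^{n(R_2+\gamma)}$ physical second-user codewords of the right $X_2$--$Y$ type, so that this count is not bottlenecked.

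In each case the conclusion $\Pr[N\ge1]\to1$, where $N$ denotes the relevant number of confusable competitors, follows from a first-and-second-moment argument: the means $\bE[N]\to\infty$ were computed above, and since distinct physical codewords are drawn independently one has $\mathrm{Var}(N)=o(\bE[N]^2)$, whence Chebyshev's inequality yields the claim. Assembling the cases, whenever $(R_1,R_2)\notin\tilde{\calR}_{cog}^{bin}(P)$ a metric-tying-or-beating competitor exists with probability tending to one, so correct decoding fails with probability tending to one, which is the assertion. I expect the main obstacle to be the \emph{binning selection rule}: the decoder evaluates only the single codeword selected per bin, whereas the competitor counts are most naturally phrased over all $2^{n(R_2+\gamma)}$ physical codewords, so these counts must be transferred to selected codewords. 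Handling this requires showing that for small $\epsilon$ the selected-codeword ensemble is close in the relevant type statistics to the ideal conditional ensemble (e.g.\ by restricting to $i'$ for which the target physical codeword is the unique jointly typical one in its bin and is therefore selected), and that the second-moment estimate survives the induced dependence across bins — the same ensemble analysis, in the spirit of \cite[Lemma 3]{ScarlettFabregas2012}, that underlies the matching achievable exponents of Theorem \ref{th: achievable error exponents II cognitive mismatch II binning}. The remaining steps parallel the superposition converse of Theorem \ref{cr: first equivalence} and Theorem 3 of \cite{Lapidoth96}.
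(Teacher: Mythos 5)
The paper does not actually spell out a proof of this theorem: it states that the argument ``adheres closely'' to the superposition converse (Theorem \ref{cr: first equivalence}), whose proof follows Lapidoth's Theorem 3 in four steps (convexity to extract a single type violating all relevant constraints strictly, continuity neighborhoods $U,V$, a computation showing a confusable competitor exists with probability tending to one, and the LLN for the correct codeword's type). Your reconstruction is the same strategy at heart --- exhibit, with probability tending to one, a competitor of weakly larger metric --- and your identification of the binning-specific obstacle (the decoder only sees the \emph{selected} codeword in each bin, while the natural counts run over all $2^{n(R_2+\gamma)}$ physical codewords) is exactly the point where this converse genuinely departs from the superposition one; the paper's achievability analysis handles the analogous issue via (\ref{eq: GP observation}) and \cite[Lemma 3]{ScarlettFabregas2012}, so invoking that machinery here is the right move. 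Your three-case decomposition by which defining inequality of $\tilde{\calR}_{cog}^{bin}(P)$ fails, and your use of a second-moment/Chebyshev bound in place of the exact product formula $(1-a)^{M_1}$ used in (\ref{eq: conditional probability given y}), are cosmetic differences.

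Two soft spots deserve explicit treatment. First, the ``$\delta$-slack removed by letting $\delta\to 0$ after $n\to\infty$'' does not work as stated: a minimizing type with $\bE_{\tilde P}\{q\}=\bE_P\{q\}$ exactly, or sitting on the boundary $I_{\tilde P}(X_1;Y)=R_1$ or $I_{\tilde P}(X_2;Y)-I_P(X_1;X_2)=R_2$ of $\calL_0^{bin}(P)$, yields only $O(1)$ expected competitors and a metric that need not exceed the true codeword's empirical metric. You need the analogue of Step 1 of the paper's proof of Theorem \ref{cr: first equivalence} --- convexity of the sum-rate bound in $(R_1,R_2)$ and continuity --- to perturb to a type for which \emph{all} of these inequalities are simultaneously strict before choosing the neighborhoods; you gesture at this by deferring to the superposition converse, but it should be carried out. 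Second, in your Case B the confusable competitor must be the pair $(\bx_1(i'),\bx_2(i',W_2))$ actually hypothesized by the decoder, and the codeword selected from bin $W_2$ for the pair $(i',W_2)$ need not be the transmitted $\bX_2$ (the only codeword carrying the correct $X_2$--$Y$ dependence required by $\calL_1(P)$); your proposed restriction to competitors whose target physical codeword is uniquely typical in its bin is the right kind of fix, but for the transmitted codeword's own bin the expected number of typical candidates is $2^{n\epsilon}$, so this step needs the $\epsilon_n\to 0$ refinement mentioned in the paper's footnote, or a deterministic selection rule, to go through.
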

The proof of Theorem \ref{cr: second equivalence} adheres closely to that of Theorem \ref{cr: first equivalence} and is thus omitted. 
\begin{corollary}\label{cr: Cor cor corollary}
\begin{flalign}
\calR_{cog}^{bin}(P)=\tilde{\calR}_{cog}^{bin}(P).
\end{flalign}
\end{corollary}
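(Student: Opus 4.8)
\noindent The plan is to establish $\calR_{cog}^{bin}(P)=\tilde{\calR}_{cog}^{bin}(P)$ by proving the two inclusions separately, exactly mirroring the treatment of Corollary \ref{cr: first corollary equivalence superposition}. The first thing I would record is that, by the definitions in (\ref{eq: primes dfn}), the first two defining inequalities of the two regions literally coincide: both read $R_1\leq \min_{\tilde P\in\calL_1(P)}I_{\tilde P}(X_1;Y,X_2)$ and $R_2\leq \min_{\tilde P\in\calL_2(P)}I_{\tilde P}(X_2;Y|X_1)$. Hence the entire content reduces to showing that, under these two inequalities, the disjunction ``$R_1\leq R_1''(P,R_2)$ or $R_2\leq R_2''(P,R_1)$'' is equivalent to the single sum-rate inequality $R_1+R_2\leq \min_{\tilde P\in\calL_0^{bin}(P)}I_{\tilde P}(X_1,X_2;Y)$.

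For the inclusion $\calR_{cog}^{bin}(P)\subseteq\tilde{\calR}_{cog}^{bin}(P)$ I would argue as in Corollary \ref{cr: first corollary equivalence superposition}: by Theorem \ref{th: achievable error exponents II cognitive mismatch II binning} every rate pair in the interior of $\calR_{cog}^{bin}(P)$ is achieved with a strictly positive error exponent, so its ensemble-average error probability vanishes, and the contrapositive of the random-coding converse Theorem \ref{cr: second equivalence} then forces such a pair to lie in $\tilde{\calR}_{cog}^{bin}(P)$. Since $\tilde{\calR}_{cog}^{bin}(P)$ is closed and $\calR_{cog}^{bin}(P)$ is the closure of its interior, the inclusion follows. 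This direction is routine.

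The substantive part is the reverse inclusion $\tilde{\calR}_{cog}^{bin}(P)\subseteq\calR_{cog}^{bin}(P)$, which I would prove directly by contraposition: assuming $R_1>R_1''(P,R_2)$ and $R_2>R_2''(P,R_1)$, I would exhibit some $\tilde P\in\calL_0^{bin}(P)$ with $I_{\tilde P}(X_1,X_2;Y)<R_1+R_2$, thereby violating the sum constraint. The two strict inequalities supply witnesses $\hat P,\check P\in\calL_0(P)$ for the minimizations defining $R_1''$ and $R_2''$. Unwinding the $|\cdot|^+$ terms through the chain rule, one checks that $\hat P$ always satisfies $I_{\hat P}(X_1;Y)<R_1$ and $I_{\hat P}(X_1,X_2;Y)<R_1+R_2$, while $\check P$ always satisfies $I_{\check P}(X_2;Y)-I_P(X_1;X_2)<R_2$ and $I_{\check P}(X_1,X_2;Y)<R_1+R_2$. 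When the relevant $|\cdot|^+$ argument vanishes, i.e.\ $I_{\hat P}(X_2;Y|X_1)<R_2$ or $I_{\check P}(X_1;Y,X_2)<R_1$, a short computation shows that $\hat P$ (respectively $\check P$) already lies in $\calL_0^{bin}(P)$, and the claim is immediate.

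The main obstacle is the remaining case, in which both $|\cdot|^+$ arguments are positive, so that $\hat P$ may violate the $X_2$-constraint of $\calL_0^{bin}(P)$ and $\check P$ may violate its $X_1$-constraint. I would resolve this by interpolation along the segment $P_t=t\hat P+(1-t)\check P$, which remains in the convex set $\calL_0(P)$ (cut out by linear constraints) and whose conditional $P_{t,Y|X_1,X_2}$ is affine in $t$; consequently $I_{P_t}(X_1;Y)$, $I_{P_t}(X_2;Y)$ and $s(t)\triangleq I_{P_t}(X_1,X_2;Y)$ are convex in $t$, each of their sublevel sets is a subinterval of $[0,1]$, and $s(t)\leq\max\{s(0),s(1)\}<R_1+R_2$ for all $t$. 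The key identity $I_{P_t}(X_1;Y)+I_{P_t}(X_2;Y)=s(t)+I_P(X_1;X_2)-I_{P_t}(X_1;X_2|Y)$ together with $I_{P_t}(X_1;X_2|Y)\geq 0$ yields $\big[I_{P_t}(X_1;Y)-R_1\big]+\big[I_{P_t}(X_2;Y)-I_P(X_1;X_2)-R_2\big]\leq s(t)-R_1-R_2<0$ for every $t\in[0,1]$. Since the first bracket is negative at $t=1$ and the second at $t=0$, and each is convex, wherever one bracket vanishes the other is strictly negative, so the two sublevel intervals must overlap; this produces a $t^{\star}$ at which both brackets are $\leq 0$, hence $P_{t^{\star}}\in\calL_0^{bin}(P)$, while $s(t^{\star})<R_1+R_2$. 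This contradicts the sum constraint and completes the converse direction, establishing the equality of the two regions.
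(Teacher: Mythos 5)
Your proposal is correct, and it reaches the equality by a route that differs in execution from the paper's. For the inclusion $\calR_{cog}^{bin}(P)\subseteq\tilde{\calR}_{cog}^{bin}(P)$ you argue exactly as the paper does (achievability plus the random coding converse, Theorem \ref{cr: second equivalence}). For the substantive inclusion $\tilde{\calR}_{cog}^{bin}(P)\subseteq\calR_{cog}^{bin}(P)$, the paper (Appendix \ref{sc: proof of Corollary equivalence 2}) first rewrites $\calR_{cog}^{bin}(P)$ via the self-referential constraint trick of Corollary \ref{cr: first corollary equivalence superposition} (restricting each minimization in $R_1''$ and $R_2''$ to the corresponding sublevel set of $\calL_0(P)$), passes to a region whose sum-rate bound is the maximum of two constrained minima of $I_{\tilde P}(X_1,X_2;Y)$, and then asserts in one sentence that, by convexity of $I_{\tilde P}(X_1,X_2;Y)$ for fixed $\tilde P_Y$ and convexity of the constraint sets, this maximum equals $\min_{\tilde P\in\calL_0^{bin}(P)}I_{\tilde P}(X_1,X_2;Y)$. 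You instead argue by contraposition: you extract explicit witnesses $\hat P,\check P\in\calL_0(P)$ from the failure of both sum-rate conditions, verify via $|t|^+\geq\max\{t,0\}$ and the chain rule that each satisfies one of the two defining constraints of $\calL_0^{bin}(P)$ together with $I(X_1,X_2;Y)<R_1+R_2$, and then interpolate along $t\hat P+(1-t)\check P$, using convexity of the three mutual informations along a segment with fixed $(X_1,X_2)$- and $Y$-marginals and the identity $I(X_1;Y)+I(X_2;Y)=I(X_1,X_2;Y)+I(X_1;X_2)-I(X_1;X_2|Y)$ to find a point of $\calL_0^{bin}(P)$ that violates the sum-rate bound of $\tilde{\calR}_{cog}^{bin}(P)$. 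This is the same convexity engine the paper invokes, but your version makes fully explicit the step the paper compresses into its final sentence (the passage from (\ref{eq: two regions union}) to (\ref{eq: kajhfkjh.kuh})), which is arguably the least transparent point of the published argument; the cost is that you must separately dispose of the boundary cases where one of the $|\cdot|^+$ arguments vanishes, which the paper's reformulation absorbs automatically. One cosmetic remark: in the easy direction, the assertion that $\calR_{cog}^{bin}(P)$ is the closure of its interior deserves a caveat for degenerate $P$ where the region is lower-dimensional, but this is at the same level of informality as the paper's own treatment of that direction.
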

The inclusion $\calR_{cog}^{bin}(P)\subseteq \tilde{\calR}_{cog}^{bin}(P)$, follows from Theorem \ref{cr: second equivalence} and since $ \calR_{cog}^{bin}(P)$ is an achievable region. The proof of the opposite direction $\tilde{\calR}_{cog}^{bin}(P)\subseteq \calR_{cog}^{bin}(P)$ appears in Appendix \ref{sc: proof of Corollary equivalence 2}.

Note that in fact $R_{cog}^{bin}(P)$ can be potentially enlarged as follows: 
\begin{lemma}\label{th: enlargement Theorem}
Let $(R_1,R_2)\in \calR_{cog}^{bin}(P)$ then $(R_1+R_2,0)$ is also achievable by random binning. 
\end{lemma}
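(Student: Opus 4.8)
The plan is to establish the lemma operationally, by a message-merging (relabeling) argument, rather than by comparing the rate constraints that define $\calR_{cog}^{bin}$ at the two rate pairs. Note first that the \emph{standard} binning ensemble for $(R_1+R_2,0)$ need not achieve it: its user-$1$ error exponent $E_1(P,R_1+R_2)$ (cf.\ (\ref{eq: E0bPR1R1 dfn})) is positive only when $R_1+R_2<R_1'(P)$, which is \emph{not} implied by $(R_1,R_2)\in\calR_{cog}^{bin}(P)$, where one only knows $R_1\leq R_1'(P)$, $R_2\leq R_2'(P)$ together with a sum constraint. The content of the lemma is thus that a different, relabeled binning code does the job. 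By the achievability of $\calR_{cog}^{bin}(P)$ (Theorem \ref{th: achievable error exponents II cognitive mismatch II binning}), $(R_1,R_2)\in\calR_{cog}^{bin}(P)$ yields a sequence of random binning codebooks drawn according to $P$ whose total average error $\bar{P}_{e,1}^{bin}+\bar{P}_{e,2}^{bin}=\mbox{Pr}\{(\hat{W}_1,\hat{W}_2)\neq(W_1,W_2)\}$ tends to zero.

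First I would fix, for each $n$, a bijection $m\leftrightarrow(m_1,m_2)$ between $\{1,\dots,2^{n(R_1+R_2)}\}$ and $\{1,\dots,2^{nR_1}\}\times\{1,\dots,2^{nR_2}\}$, and reuse the very same codebooks and binning encoder to build a code for $(R_1+R_2,0)$. The new non-cognitive user carries the combined message $m$ and transmits $\bx_1(m_1)$, a map deliberately non-injective in the transmitted codeword. The new cognitive user carries the trivial rate-$0$ message but, knowing the full index $m=(m_1,m_2)$, transmits the original binning codeword $\bx_2(m_1,m_2)$, using the same shared tie-breaking convention as in the $(R_1,R_2)$ scheme. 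The decoder applies the original rule (\ref{eq: decision rule cognitive mac}) to produce $(\hat{m}_1,\hat{m}_2)$ and outputs $\hat m\leftrightarrow(\hat m_1,\hat m_2)$ for user $1$.

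The key observation is that the channel inputs $(\bx_1(m_1),\bx_2(m_1,m_2))$, the output $\by$, and the decoding metric are literally those of the original $(R_1,R_2)$ experiment, so the joint distribution of all relevant quantities is unchanged. Consequently the combined-message error event $\{\hat m\neq m\}$ coincides with the original joint error event $\{(\hat{W}_1,\hat{W}_2)\neq(W_1,W_2)\}$, whose probability equals $\bar{P}_{e,1}^{bin}+\bar{P}_{e,2}^{bin}$ and hence vanishes; the trivial user-$2$ message is always correct. Averaging this identity over the binning ensemble shows that $(R_1+R_2,0)$ is achievable, and since the code is built from a random binning ensemble with user $2$ still binning, it is achievable by random binning, as claimed.

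The step I expect to be the main obstacle is the conceptual justification that a non-injective user-$1$ encoder is legitimate here: two distinct combined messages $(m_1,m_2)$ and $(m_1,m_2')$ map to the same $\bx_1(m_1)$, so decodability cannot come from user $1$ alone. This is exactly where the cognitive structure is essential: the rule (\ref{eq: decision rule cognitive mac}) jointly scores the triple $(\bx_1(i),\bx_2(i,j),\by)$, and the cognitive user $2$ supplies distinct codewords for different second indices, so the merged messages remain separable precisely to the extent that $(m_1,m_2)$ was separable in the original scheme. I would therefore only need to check that the binning encoding map $(m_1,m_2)\mapsto\bx_2(m_1,m_2)$ and its tie-breaking rule carry over unchanged to the merged indexing; once this is verified, no re-derivation of error exponents is required and the conclusion follows.
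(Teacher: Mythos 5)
Your proof is correct and is essentially the paper's own argument: the paper likewise proves the lemma by reassigning message information between the users (splitting $W_1$ into parts carried physically by user 1 and by the cognitive user 2) while keeping the physical binning codebooks, encoders, and decoder unchanged, so that the combined-message error event coincides with the original joint error event. Your version merely spells out the relabeling and the legitimacy of the non-injective user-1 encoder in more operational detail, and your preliminary remark that the standard binning ensemble at $(R_1+R_2,0)$ would not suffice is a correct and worthwhile observation, but no new idea beyond the paper's is involved.
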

\begin{proof}
The lemma follows since the cognitive encoder can assign some of the information it transmits to the non-cognitive user. The message $W_1$ can be split to $W_{1,a}$ and $W_{1,b}$ corresponding to rates $R_{1,a}$ and $R_{2,b}$. User $1$ transmits $W_{1,a}$ and user $2$ transmits $(W_2,W_{1,b})$. The achievable rate-region becomes \begin{flalign}\label{eq: dfn calR star cog}
\calR_{cog}^{bin,*}(P)=\bigg\{ (R_1,R_2):\; \exists R_{1,a},R_{1,b}\geq 0:\; 
&R_1=R_{1,a}+R_{1,b},\nonumber\\
&R_{1,a}\leq R_1'(P),\nonumber\\
&R_{1,b}+R_2\leq R_2'(P), \nonumber\\
&R_{1,a}\leq R_1''(P,R_{1,b}+R_2)\mbox{ or }  R_{1,b}+R_2 \leq R_2''(P,R_{1,a})\bigg\}.
\end{flalign}
\end{proof} 
The resulting region of rates achievable by random binning is described in the following theorem. 
\begin{theorem}\label{th: achievable region binning}
The capacity region of the finite alphabet cognitive MAC $P_{Y|X_1,X_2}$ with decoding metric $q(x_1,x_2,y)$ contains 
the set of rate-pairs 
\begin{flalign}\label{eq: convex hull bin}
\calR_{cog}^{bin}=\mbox{ closure of CH of }\underset{P}{\cup} \calR_{cog}^{bin,*}(P)
\end{flalign}
where the union is over all $P\in\calP(\calX_1\times\calX_2\times\calY)$ with conditional $P_{Y|X_1,X_2}$ given by the channel. 
\end{theorem}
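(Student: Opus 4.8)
The plan is to build the region $\calR_{cog}^{bin}$ up from its per-input-distribution building blocks, each of which has already been shown to be achievable, and then to close it up using the convexity and closedness of the capacity region. The structure mirrors exactly the reasoning used to pass from $\tilde{\calR}_{cog}^{sup}(P)$ to the achievability statement of Theorem \ref{th: achievable II cognitive mismatch II} in the superposition case, with $\calR_{cog}^{bin,*}(P)$ now playing the role of the constituent region.

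First I would fix an arbitrary $P\in\calP(\calX_1\times\calX_2\times\calY)$ whose conditional $P_{Y|X_1,X_2}$ agrees with the channel. By Theorem \ref{th: achievable error exponents II cognitive mismatch II binning} the region $\calR_{cog}^{bin}(P)$ of (\ref{eq: bin constraints}) is achievable by the random binning scheme, and by Corollary \ref{cr: Cor cor corollary} it coincides with $\tilde{\calR}_{cog}^{bin}(P)$. Lemma \ref{th: enlargement Theorem} then upgrades this to the enlarged region $\calR_{cog}^{bin,*}(P)$ of (\ref{eq: dfn calR star cog}), obtained by letting the cognitive encoder carry a portion $W_{1,b}$ of the first user's message. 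Consequently every rate-pair in $\calR_{cog}^{bin,*}(P)$ lies in the capacity region, and taking the union over all admissible $P$ shows that $\underset{P}{\cup}\,\calR_{cog}^{bin,*}(P)$ is contained in the capacity region.

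Next I would invoke the two structural properties of the capacity region. Convexity follows from the standard time-sharing argument, which goes through here precisely because the decoding metric is additive: concatenating a length-$n_1$ code and a length-$n_2$ code into a product code of length $n_1+n_2$ yields a combined metric that splits as a weighted sum over the two blocks, so the joint mismatched decoder maximizing (\ref{eq: decision rule cognitive mac}) decodes the two blocks independently and its error probability is bounded by the sum of the two blockwise error probabilities. Hence any convex combination of achievable rate-pairs is achievable, and the capacity region contains the convex hull of $\underset{P}{\cup}\,\calR_{cog}^{bin,*}(P)$. Closedness holds by definition of the capacity region as the closure of the achievable set, so it also contains the closure of that convex hull, which is exactly $\calR_{cog}^{bin}$ as defined in (\ref{eq: convex hull bin}). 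This yields the desired inclusion of $\calR_{cog}^{bin}$ in the capacity region.

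I do not anticipate a genuine obstacle, since all the analytic content, namely the exponent computation and the equivalence of the two regional descriptions, has been discharged in Theorem \ref{th: achievable error exponents II cognitive mismatch II binning}, Corollary \ref{cr: Cor cor corollary} and Lemma \ref{th: enlargement Theorem}. The only point meriting a sentence of care is the time-sharing step: in the mismatched setting one must verify that blockwise-independent decoding is optimal for the concatenated product code, and additivity of $q$ secures this. Everything else is the routine ``closure of the convex hull of an achievable union is achievable'' template already applied to the superposition scheme.
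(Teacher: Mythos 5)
Your proposal is correct and follows essentially the same route as the paper, which obtains this theorem directly from Theorem \ref{th: achievable error exponents II cognitive mismatch II binning} (achievability of $\calR_{cog}^{bin}(P)$ for each fixed $P$), Lemma \ref{th: enlargement Theorem} (enlargement to $\calR_{cog}^{bin,*}(P)$), and the closedness and convexity of the capacity region, exactly as was done for the superposition region in Theorem \ref{th: achievable II cognitive mismatch II}. Your added remark that additivity of $q$ is what makes the time-sharing/product-code decoder decompose blockwise is a correct and worthwhile observation that the paper leaves implicit; the appeal to Corollary \ref{cr: Cor cor corollary} is harmless but not needed here.
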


\section{Discussion - Mismatched Cognitive MAC}\label{sc: discussion}

A few comments are in order:
\begin{itemize}
\item
Next a Lemma is proved, which establishes the fact that the achievable region of the cognitive mismatched MAC, $\calR_{cog}^{bin}$, contains the achievable region of the mismatched MAC, $\calR_{LM}$ (\ref{eq: calRLM_dfn}).
\begin{lemma}\label{lm: Lemma yoffi}
$
\calR_{LM}\subseteq \calR_{cog}^{bin}
$.
\end{lemma}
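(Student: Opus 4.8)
The plan is to exploit the fact that $\calR_{cog}^{bin}$ is, by construction, a closed convex set containing $\calR_{cog}^{bin,*}(P)\supseteq\calR_{cog}^{bin}(P)=\tilde{\calR}_{cog}^{bin}(P)$ (the last equality being Corollary \ref{cr: Cor cor corollary}) for every admissible $P$. Since $\calR_{LM}$ is the closure of the convex hull of the union over $(P_{X_1},P_{X_2})$ of the open polytopes cut out by $R_1<\tilde R_1$, $R_2<\tilde R_2$, $R_1+R_2<\tilde R_0$, it suffices to show that each such polytope is contained in $\calR_{cog}^{bin}$; closedness and convexity of $\calR_{cog}^{bin}$ then deliver the containment of the full convex hull and its closure. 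For a fixed pair $(P_{X_1},P_{X_2})$ I would make in the cognitive scheme the specific choice $P=P_{X_1}\times P_{X_2}\times P_{Y|X_1,X_2}$, so that the two inputs are independent and $I_P(X_1;X_2)=0$, and then prove the polytope lies in $\tilde{\calR}_{cog}^{bin}(P)$.

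First I would rewrite Lapidoth's three bounds by the chain rule so that their objective functions match those in $R_1'(P)$, $R_2'(P)$ and in the cognitive sum-rate. Since $I_f(X_1;Y|X_2)+I_f(X_1;X_2)=I_f(X_1;X_2,Y)$ and $I_f(X_2;Y|X_1)+I_f(X_1;X_2)=I_f(X_2;X_1,Y)$ for every $f$, one has $\tilde R_1=\min_{f\in\calD_{(1)}}I_f(X_1;X_2,Y)$, $\tilde R_2=\min_{f\in\calD_{(2)}}I_f(X_2;X_1,Y)$, and $\tilde R_0=\min_{f\in\calD_{(0)}}[I_f(X_1,X_2;Y)+I_f(X_1;X_2)]$.

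The heart of the argument is then a set-inclusion comparison: I claim $\calL_1(P)\subseteq\calD_{(1)}$, $\calL_2(P)\subseteq\calD_{(2)}$, and $\calL_0^{bin}(P)\subseteq\calD_{(0)}$. Indeed each $\calL$-set pins the full input joint $f_{X_1,X_2}=P_{X_1,X_2}$, which in particular forces the two individual marginals $f_{X_1}=P_{X_1}$ and $f_{X_2}=P_{X_2}$ demanded by the corresponding $\calD$-set, while the remaining constraints ($f_{X_2,Y}=P_{X_2,Y}$, resp.\ $f_{X_1,Y}=P_{X_1,Y}$, resp.\ $f_Y=P_Y$ together with $\bE_f(q)\ge\bE_P(q)$ and the caps $I_f(X_1;Y)\le R_1$, $I_f(X_2;Y)\le R_2$) coincide; here the choice $I_P(X_1;X_2)=0$ is exactly what makes the offset $I_P(X_2;X_1)$ in the definition of $\calL_0^{bin}(P)$ vanish, so its caps match those of $\calD_{(0)}$ verbatim. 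Because every $f$ in an $\calL$-set has $f_{X_1,X_2}=P_{X_1}\times P_{X_2}$, we also have $I_f(X_1;X_2)=0$ on these sets, so the objectives of $R_1'(P)$, $R_2'(P)$ and of $\min_{\calL_0^{bin}(P)}I_f(X_1,X_2;Y)$ agree pointwise with the chain-rule-rewritten objectives of $\tilde R_1,\tilde R_2,\tilde R_0$. Minimizing the same objective over a smaller feasible set can only increase its value, whence $R_1'(P)\ge\tilde R_1$, $R_2'(P)\ge\tilde R_2$, and $\min_{\calL_0^{bin}(P)}I_f(X_1,X_2;Y)\ge\tilde R_0$ (the sum-rate comparison being for a common $(R_1,R_2)$, as both $\calL_0^{bin}(P)$ and $\calD_{(0)}$ depend on the rate pair through identical caps). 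Consequently any $(R_1,R_2)$ with $R_1<\tilde R_1$, $R_2<\tilde R_2$, $R_1+R_2<\tilde R_0$ satisfies $R_1\le R_1'(P)$, $R_2\le R_2'(P)$, and $R_1+R_2\le\min_{\calL_0^{bin}(P)}I_f(X_1,X_2;Y)$, i.e.\ lies in $\tilde{\calR}_{cog}^{bin}(P)$.

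The hard part will be the careful bookkeeping of the offsets $I_f(X_1;X_2)$: they are generally nonzero on Lapidoth's looser constraint sets (which pin only marginals) but vanish on the cognitive sets (which pin the full product joint), and it is precisely this cancellation that absorbs the ``$+I_f(X_1;X_2)$'' terms in $\tilde R_1,\tilde R_2,\tilde R_0$. I would also dispose of the degenerate cases where a feasible set is empty (the corresponding minimum is $+\infty$ and the constraint is vacuous, so the comparison is trivial), and note that the strict inequalities of the LM polytope comfortably yield the non-strict inequalities defining $\tilde{\calR}_{cog}^{bin}(P)$. Chaining $\tilde{\calR}_{cog}^{bin}(P)=\calR_{cog}^{bin}(P)\subseteq\calR_{cog}^{bin,*}(P)\subseteq\calR_{cog}^{bin}$ then completes the per-$(P_{X_1},P_{X_2})$ inclusion, and the closed convexity of $\calR_{cog}^{bin}$ finishes the proof.
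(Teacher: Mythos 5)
Your proposal is correct and follows essentially the same route as the paper's proof: choose the product input distribution $P=P_{X_1}\times P_{X_2}\times P_{Y|X_1,X_2}$, observe that each $\calL$-set (with the $I_P(X_1;X_2)=0$ offset) is a subset of the corresponding $\calD_{(i)}$ with a matching objective, so that minimizing over the smaller set can only increase $\tilde R_1,\tilde R_2,\tilde R_0$, and then invoke the union/convex-hull structure of $\calR_{cog}^{bin}$. The only cosmetic difference is that the paper expresses the matched objectives as divergences $D(f_{X_1,X_2,Y}\|P_{X_2}P_{X_1,Y})$ etc., while you use the equivalent chain-rule identities directly.
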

\begin{proof}

For this proof we use the expression (\ref{cr: equivalent representation bin}). 
Recall the definition of $\tilde{R}_2$ in (\ref{eq: calRLM_dfn}) it satisfies
\begin{flalign}
\tilde{R}_2
=&\min_{f\in \calD_{(2)}} I_f(X_2;Y|X_1)+I_f(X_1;X_2) \nonumber\\
= &\min_{f\in \calD_{(2)}} D(f_{X_1,X_2,Y}\| P_{X_2}P_{X_1,Y})
\nonumber\\
\leq &\min_{ f\in \calD_{(2)}:\; f_{X_1,X_2}=P_{X_1}P_{X_2}} D(f_{X_1,X_2,Y}\| P_{X_2}P_{X_1,Y})\nonumber\\
= &\min_{ f\in \calL_2\left(P_{X_1}P_{X_2}P_{Y|X_1,X_2}\right)} I_f(X_2;Y|X_1)\nonumber\\
= &R_2'(P_{X_1}P_{X_2}P_{Y|X_1,X_2}).
\end{flalign}
where $R_2'(P)$ is defined in (\ref{eq: primes dfn}). Similarly, 
\begin{flalign}
 \tilde{R}_0
=&\min_{f\in \calD_{(0)}} I_f(X_1,X_2;Y)+I_f(X_1;X_2) \nonumber\\
= &\min_{f\in \calD_{(0)}} D(f_{X_1,X_2,Y}\| P_{X_2}P_{X_1}P_{Y})
\nonumber\\
\leq & \min_{ f\in \calD_{(0)}:\; f_{X_1,X_2}=P_{X_1}P_{X_2}} D(f_{X_1,X_2,Y}\| P_{X_2}P_{X_1}P_{Y})
\nonumber\\
\leq &\min_{ f\in \calL_0^{bin}(P_{X_1}P_{X_2}P_{Y|X_1,X_2})}  I_f(X_1,X_2;Y),
\nonumber\\
\end{flalign}
where the inequality follows since $ \calL_0^{bin}(P_{X_1}P_{X_2}P_{Y|X_1,X_2})\subseteq \calD_{(0)}$. A similar inequality can be derived for $\tilde{R}_1$ and is omitted.

The definition of $\calR_{cog}^{bin}$ (\ref{eq: convex hull bin}) includes a union over all $P_{X_1,X_2}$ 
including product p.m.f.'s of the form $P_{X_1,X_2}=P_{X_1}P_{X_2}$, whereas the definition of $\calR_{LM}$ (\ref{eq: calRLM_dfn}) includes a union over product p.m.f.'s alone, and thus $\calR_{LM}\subseteq \calR_{cog}^{bin}$.
\end{proof}
The fact that $\calR_{LM}\subseteq \calR_{cog}^{bin}$ is not surprising as one expects that an achievable region of a mismatched cognitive MAC should be larger than that of a  mismatched (non-cognitive) MAC. 
\item
Next it is proved that 
in the matched case $\calR_{cog}^{sup}=\calR_{cog}^{bin}$ and  the regions are both equal to the matched capacity region.
\begin{proposition}\label{th: matched theorem }
In the matched case
where
\begin{flalign}
q(x_1,x_2,y)=\log p(y|x_1,x_2), 
\end{flalign}
 $\calR_{cog}^{sup}=\calR_{cog}^{bin}=\calR_{cog}^{match}$ where
\begin{flalign}\label{eq: calR_COG_matched_case superposition}
\calR_{cog}^{match}=&
 \underset{P_{X_1,X_2}}{\cup}  \big\{(R_1,R_2):\nonumber\\
R_2&\leq I_P(X_2;Y|X_1),
R_1+R_2\leq I_P(X_1,X_2;Y)\big\},
\end{flalign}
and $P$ abbreviates $P_{X_1,X_2}P_{Y|X_1,X_2}$. 
\end{proposition}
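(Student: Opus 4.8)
The plan is to reduce the entire proposition to a single computation about the matched metric and then to read off each minimization appearing in the defining regions. Write $P=P_{X_1,X_2}P_{Y|X_1,X_2}$ and fix any $f\in\calK(P)$, so $f_{X_1,X_2}=P_{X_1,X_2}$. With $q(x_1,x_2,y)=\log p(y|x_1,x_2)$ one has $-\bE_P\{q\}=H_P(Y|X_1,X_2)$ and $-\bE_f\{q\}=H_f(Y|X_1,X_2)+D(f_{Y|X_1,X_2}\|P_{Y|X_1,X_2}\,|\,P_{X_1,X_2})$, so that $f\in\calG_q(P)$ (i.e.\ $\bE_f\{q\}\ge\bE_P\{q\}$) is equivalent to $H_f(Y|X_1,X_2)+D(f_{Y|X_1,X_2}\|P_{Y|X_1,X_2}\,|\,P_{X_1,X_2})\le H_P(Y|X_1,X_2)$. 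In particular every admissible $f$ satisfies $H_f(Y|X_1,X_2)\le H_P(Y|X_1,X_2)$, with equality only if $f_{Y|X_1,X_2}=P_{Y|X_1,X_2}$. This is the sole point at which the matched assumption is used.

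Feeding this inequality into the three functionals of (\ref{eq: primes dfn})--(\ref{eq: sets definitions}) shows each is minimized at $\tilde P=P$. For $\tilde P\in\calL_2(P)$ the constraint $\tilde P_{X_1,Y}=P_{X_1,Y}$ fixes $H_{\tilde P}(Y|X_1)=H_P(Y|X_1)$, whence $I_{\tilde P}(X_2;Y|X_1)=H_P(Y|X_1)-H_{\tilde P}(Y|X_1,X_2)\ge I_P(X_2;Y|X_1)$, giving $R_2'(P)=I_P(X_2;Y|X_1)$. Using $\tilde P_{X_2,Y}=P_{X_2,Y}$ on $\calL_1(P)$ yields $R_1'(P)=I_P(X_1;Y,X_2)$, and using $\tilde P_Y=P_Y$ on $\calL_0(P)$ yields $\min_{\tilde P\in\calL_0(P)}I_{\tilde P}(X_1,X_2;Y)=I_P(X_1,X_2;Y)$, each attained at $\tilde P=P$.

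For superposition I would prove the pointwise identity $\tilde\calR_{cog}^{sup}(P)=\calR_{cog}^{match}(P)$, where $\calR_{cog}^{match}(P)$ is the $P$-section of (\ref{eq: calR_COG_matched_case superposition}). The $R_2$-constraint in (\ref{eq: alternative representation sup MAC}) already matches by the previous step. The inclusion $\calR_{cog}^{match}(P)\subseteq\tilde\calR_{cog}^{sup}(P)$ is immediate, since $\calL_0^{sup}(P)\subseteq\calL_0(P)$ forces $\min_{\calL_0^{sup}(P)}I(X_1,X_2;Y)\ge I_P(X_1,X_2;Y)\ge R_1+R_2$. For the reverse inclusion I split on $I_P(X_1;Y)\le R_1$: if it holds then $P\in\calL_0^{sup}(P)$ and the sum-minimum is exactly $I_P(X_1,X_2;Y)$; if not, the chain rule $I_P(X_1;Y)+I_P(X_2;Y|X_1)=I_P(X_1,X_2;Y)$ together with $R_1<I_P(X_1;Y)$ and $R_2\le I_P(X_2;Y|X_1)$ gives $R_1+R_2<I_P(X_1,X_2;Y)$ directly, so the matched sum-constraint holds in either case. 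Taking the union over $P$ and the closed convex hull, and using that $\calR_{cog}^{match}$ is closed and convex (being the matched cognitive-MAC capacity region, a special case of the Slepian--Wolf MAC with a common message \cite{SlepianWolf73}), gives $\calR_{cog}^{sup}=\calR_{cog}^{match}$.

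For binning I would argue by double inclusion at the region level. The inclusion $\calR_{cog}^{bin}\subseteq\calR_{cog}^{match}$ is free, since $\calR_{cog}^{bin}$ is achievable and $\calR_{cog}^{match}$ is the matched capacity region. For $\calR_{cog}^{match}\subseteq\calR_{cog}^{bin}$ I would verify that the dominant corner $(I_P(X_1;Y),I_P(X_2;Y|X_1))$ lies in $\tilde\calR_{cog}^{bin}(P)$ of (\ref{cr: equivalent representation bin}): the bound $R_1\le R_1'(P)=I_P(X_1;Y,X_2)$ holds because $I_P(X_1;Y)\le I_P(X_1;Y,X_2)$, the bound $R_2\le R_2'(P)=I_P(X_2;Y|X_1)$ holds with equality, and $P\in\calL_0^{bin}(P)$ because $I_P(X_1;Y)\le R_1$ and, by expanding $I_P(X_2;X_1,Y)$ two ways, $I_P(X_2;Y)-I_P(X_1;X_2)=I_P(X_2;Y|X_1)-I_P(X_1;X_2|Y)\le R_2$; hence the sum-minimum equals $I_P(X_1,X_2;Y)=R_1+R_2$. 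The remaining extreme point $(I_P(X_1,X_2;Y),0)$ is then supplied by the rate-splitting enlargement of Lemma \ref{th: enlargement Theorem}, and downward-closedness together with convexity fill in the rest of $\calR_{cog}^{match}(P)$; unioning over $P$ and taking the closed convex hull yields equality, and hence $\calR_{cog}^{sup}=\calR_{cog}^{bin}=\calR_{cog}^{match}$. I expect the binning lower bound to be the main obstacle: because binning carries the individual constraint $R_1\le R_1'(P)=I_P(X_1;Y,X_2)$, which is generally strictly tighter than what the matched region permits, one cannot match the two regions section-by-section as in the superposition case, and reaching the full-sum corner genuinely requires invoking Lemma \ref{th: enlargement Theorem}; the supporting membership $P\in\calL_0^{bin}(P)$ also rests on the identity $I_P(X_2;Y)-I_P(X_1;X_2)=I_P(X_2;Y|X_1)-I_P(X_1;X_2|Y)$, which must be checked with care.
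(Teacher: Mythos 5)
Your proposal is correct and follows essentially the same route as the paper: the matched metric plus non-negativity of divergence gives $H_{\tilde P}(Y|X_1,X_2)\le H_P(Y|X_1,X_2)$ for every admissible $\tilde P$, so each minimization in (\ref{eq: primes dfn}) is attained at $\tilde P=P$, the pentagon's dominant corner $(I_P(X_1;Y),I_P(X_2;Y|X_1))$ is achievable by binning, and Lemma \ref{th: enlargement Theorem} with convexity recovers the full matched region. The only difference is that you carry out the superposition half explicitly (including the correct case split on $I_P(X_1;Y)\lessgtr R_1$), where the paper defers to Lapidoth's Proposition 1; your computations, including the identity $I_P(X_2;Y)-I_P(X_1;X_2)=I_P(X_2;Y|X_1)-I_P(X_1;X_2|Y)$ needed for $P\in\calL_0^{bin}(P)$, all check out.
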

\begin{proof} 
The proof that $\calR_{cog}^{sup}=\calR_{cog}^{match}$ follows very similarly to the proof of \cite[Proposition $1$]{Lapidoth96}, and is thus omitted. 
To prove $\calR_{cog}^{bin}=\calR_{cog}^{match}$, note that as in \cite[Proposition $1$]{Lapidoth96}, for every $\tilde{P}\in \calL_1(P)$, we have
\begin{flalign}
I_{\tilde{P}}(X_1;Y,X_2)& =I_{\tilde{P}}(X_1;Y|X_2)+I_P(X_1;X_2)\nonumber\\
& =H_P(Y|X_2)- H_{\tilde{P}}(Y|X_1,X_2)+I_P(X_1;X_2)\nonumber\\
& \stackrel{(a)}{\geq} H_P(Y|X_2)- E_{\tilde{P}}\log( P_{Y|X_1,X_2})+I_P(X_1;X_2)\nonumber\\
& \stackrel{(b)}{\geq} H_P(Y|X_2)- E_P\log( P_{Y|X_1,X_2})+I_P(X_1;X_2)\nonumber\\
& = I_P(X_1;Y,X_2),
\end{flalign}
where $(a)$ follows from the non-negativity of the divergence and $(b)$ follows since $\tilde{P}\in \calL_2(P)$ and thus $E_P \log(q)\leq E_{\tilde{P}}\log(q)$. 
Similarly, one can show that for every $\tilde{P}\in \calL_2(P)$
\begin{flalign}
I_{\tilde{P}}(X_2;Y|X_1) \geq I_P(X_2;Y,X_1),
\end{flalign}
and that for all $\tilde{P}\in \calL_0(P)$
\begin{flalign}
I_{\tilde{P}}(X_1,X_2;Y) \geq I_P(X_1,X_2;Y).
\end{flalign}
This yields that the union of rate-pairs $(R_1,R_2)$ achievable by binning contains the rate-pairs satisfying
\begin{flalign}
R_1\leq &I_P(X_1;Y,X_2)\nonumber\\
R_2\leq &I_P(X_2;Y|X_1)\nonumber\\
R_1+R_2\leq &I_P(X_1,X_2;Y)
 .
\end{flalign}
But, since $I_P(X_1;Y,X_2)\geq I_P(X_1;Y)$, the sum of the bounds on the individual rates is looser than the sum-rate bound, are we get the achievable vertex point $(R_1,R_2)=\left(I_P(X_1;Y), I_P(X_2;Y|X_1)\right)$, and by enlarging this region according to Lemma \ref{th: enlargement Theorem} combined with time-sharing, the region 
\begin{flalign}
R_2\leq &I_P(X_2;Y|X_1)\nonumber\\
R_1+R_2\leq &I_P(X_1,X_2;Y)
 ,
\end{flalign}
is achievable using random binning and is contained in $\calR_{cog}^{bin,*}(P)$ and thus also in $\calR_{cog}^{bin}$. This implies that $\calR_{cog}^{match}\subseteq \calR_{cog}^{bin}$, and hence $\calR_{cog}^{match}=\calR_{cog}^{bin}$.
\end{proof}
Theorem \ref{th: matched theorem } is clearly an example for which $\calR_{LM}\subseteq\calR_{cog}^{sup}=\calR_{cog}^{bin}$ with obvious cases in which the inclusion is strict. 

\item Next we compare the achievable regions using superposition coding and random binning, i.e., $\calR_{cog}^{sup}$ (\ref{eq: bin constraints}) and $\calR_{cog}^{bin,*}$ (defined in Lemma \ref{th: enlargement Theorem}).
 In principle, neither region $\calR_{cog}^{sup}(P)$, $\calR_{cog}^{bin,*}(P)$ dominates the other, 
 as the second inequality in (\ref{eq: sup constraints}) is stricter than the third inequality (\ref{eq: bin constraints}) and the first inequality in (\ref{eq: bin constraints}) does not appear in (\ref{eq: sup constraints}).

It is easily verified that unless $ R_1''(P,R_2'(P))> R_1'(P)$ and $R_2''(P,R_1'(P))> R_2'(P)$
 (that is, unless the sum of the individual rates bounds is stricter than both sum-rate bounds), we have $  \calR_{cog}^{sup}(P)\subseteq \calR_{cog}^{bin,*}(P)$, otherwise, the opposite inclusion $  \calR_{cog}^{bin,*}(P)\subseteq \calR_{cog}^{sup}(P)$ may occur.

\item An example is presented next for which $\calR_{cog}^{sup}\subset \calR_{cog}^{bin}$ with strict inclusion. 
Consider the following parallel MAC which is a special case of the channel that was studied by Lapidoth \cite[Section IV, Example 2]{Lapidoth96}. Let the alphabets $\calX_1=\calX_2=\calY_1=\calY_2$ be binary $\{0,1\}$. The output of the channel is given by
\begin{flalign}
Y=& (Y_1,Y_2)
\end{flalign}
where
\begin{flalign}
Y_1=& X_1\nonumber\\
Y_2=& X_2\oplus Z,
\end{flalign}
$\oplus$ denotes modulo-$2$ addition, and $Z\sim \mbox{Bernoulli}(p'')$, with the decoding metric
\begin{flalign}
q(x_1,x_2,(y_1,y_2))=& -\frac{1}{2}(x_1\oplus y_1 +x_2\oplus y_2).
\end{flalign}
Now, the capacity region of this channel with non-cognitive users was established by Lapidoth and is given by the rectangle
\begin{flalign}
\calR_{LM}=\left\{ (R_1,R_2):\; \begin{array}{l} 
R_2\leq  1-h_2(p'')\\
R_1\leq 1
\end{array}\right\}
\end{flalign}
where $h_2(p'')=-p''\log(p'')-(1-p'')\log(1-p'')$. 

From Lemma \ref{lm: Lemma yoffi} we know that $\calR_{LM}\subseteq \calR_{cog}^{bin}$. Consequently, from Lemma \ref{th: enlargement Theorem}, we obtain that $\calR_{cog}^{bin}$ contains the region
\begin{flalign}\label{eq: trapezoid region parallel}
\calR_{cog}^{bin'}
=\left\{ (R_1,R_2):\; \begin{array}{l} 
R_2\leq  1-h_2(p'')\\
R_1+R_1\leq  2-h_2(p'')\end{array}\right\},
\end{flalign}
since this is also the capacity region of the matched cognitive MAC, it can be concluded that random binning combined with enlargement of $R_1$ according to Lemma \ref{th: enlargement Theorem} achieves the capacity region.

Next, it is demonstrated that the vertex point $(R_1,R_2)=(1,1-h_2(p''))$ is not achievable by superposition coding when the cognitive user is user $2$.
Consider the sum-rate bound in $\tilde{\calR}_{cog}^{sup}(P)$ (\ref{eq: alternative representation sup MAC})
\begin{flalign}
R_1+R_2\leq &\max_{P_{X_1,X_2}}\min_{\tilde{P}\in \calL_0^{sup}(P)}I(X_1,X_2;Y_1,Y_2)\nonumber\\
=& \max_{P_{X_1,X_2}}\min_{\tilde{P}\in \calL_0(P): I_{\tilde{P}}(X_1;Y)\leq R_1}I(X_1,X_2;Y_1,Y_2).
\end{flalign}
Consequently, for $R_1=1$, we obtain
\begin{flalign}
R_2\leq & \max_{P_{X_1,X_2}}\min_{\tilde{P}\in \calL_0(P): I_{\tilde{P}}(X_1;Y)\leq 1}I(X_1,X_2;Y_1,Y_2)-1\nonumber\\
=& \max_{P_{X_1,X_2}}\min_{\tilde{P}\in \calL_0(P)}I(X_1,X_2;Y_1,Y_2)-1.
\end{flalign}
since clearly $ I_{\tilde{P}}(X_1;Y)\leq 1$ is always satisfied as $X_1$ is binary. Now, the term $\max_{P_{X_1,X_2}}\min_{\tilde{P}\in \calL_0(P)}I(X_1,X_2;Y_1,Y_2)$ is simply the maximal achievable rate by ordinary random coding for the single-user channel from $X=(X_1,X_2)$ to $Y=(Y_1,Y_2)$, which was characterized for this channel in \cite[Section IV, Example 2]{Lapidoth96}, and is given by $2\left(1-h_2(p''/2)\right)$. This yields that if $R_1=1$ then $R_2$ achievable by superposition coding satisfies 
\begin{flalign}
R_2\leq & 1-2h_2(p''/2),
\end{flalign}
which is strictly lower than $R_2=1-h_2(p'')$ which is achievable by binning.

It should be noted that although $\calR_{cog}^{sup}\subset  \calR_{cog}^{bin}$ in this case, if the roles of the users were reversed, i.e., user $1$ were cognitive and user $2$ were non-cognitive, the vertex point $(R_1,R_2)=(1,1-h_2(p''))$ would have been achievable by superposition as well (see the explanation following Theorem \ref{th: single user inner bound}).

\item
 The following theorem provides a condition for $\calR_{cog}^{sup}(P)\subseteq \calR_{cog}^{bin}(P)$.
\begin{proposition}\label{cr: bin eq sup}
If $R_2'(P)\geq I_P(X_2;Y)-I_P(X_1;X_2)$ then $\calR_{cog}^{sup}(P)\subseteq \calR_{cog}^{bin}(P)$. \end{proposition}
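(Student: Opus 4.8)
The plan is to work with the alternative descriptions $\tilde{\calR}_{cog}^{sup}(P)$ and $\tilde{\calR}_{cog}^{bin}(P)$ (valid by Corollaries \ref{cr: first corollary equivalence superposition} and \ref{cr: Cor cor corollary}), and to handle the one constraint that can fail by means of the rate-splitting in Lemma \ref{th: enlargement Theorem}. Fix $(R_1,R_2)\in\calR_{cog}^{sup}(P)$, so by (\ref{eq: alternative representation sup MAC}) we have $R_2\le R_2'(P)$ and $R_1+R_2\le\min_{\tilde P\in\calL_0^{sup}(P)}I_{\tilde P}(X_1,X_2;Y)$. The first thing I would record is a monotonicity remark: $\calL_0^{bin}(P)$ differs from $\calL_0^{sup}(P)$ only by the extra constraint $I_{\tilde P}(X_2;Y)-I_P(X_1;X_2)\le R_2$ and by possibly a smaller threshold on $I_{\tilde P}(X_1;Y)$, so $\calL_0^{bin}(P)\subseteq\calL_0^{sup}(P)$ and the superposition sum-rate bound immediately implies the binning sum-rate bound. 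Hence if $R_1\le R_1'(P)$ the point already lies in $\tilde{\calR}_{cog}^{bin}(P)$, and the hypothesis is not even needed.

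The substantive case is $R_1>R_1'(P)$, where the individual constraint $R_1\le R_1'(P)$ of binning is violated and must be restored by offloading rate onto the cognitive user. The heart of the proof is the sum-rate inequality
\begin{flalign}\label{eq: key sum bound plan}
R_1+R_2\le R_1'(P)+R_2'(P).
\end{flalign}
To establish it, let $\hat P$ attain $R_1'(P)=\min_{\tilde P\in\calL_1(P)}I_{\tilde P}(X_1;Y,X_2)$; then $\hat P\in\calL_1(P)\subseteq\calL_0(P)$ is feasible in the minimization defining $R_1''(P,R_2)$, so that $R_1\le R_1''(P,R_2)\le I_{\hat P}(X_1;Y)+|I_{\hat P}(X_2;Y|X_1)-R_2|^+$. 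Since $\hat P_{X_2,Y}=P_{X_2,Y}$ and $\hat P_{X_1,X_2}=P_{X_1,X_2}$, two applications of the chain rule give the identity $I_{\hat P}(X_1,X_2;Y)=R_1'(P)+[I_P(X_2;Y)-I_P(X_1;X_2)]$ together with the bound $C\triangleq I_{\hat P}(X_2;Y|X_1)\ge I_P(X_2;Y)-I_P(X_1;X_2)$. Writing $I_{\hat P}(X_1;Y)=I_{\hat P}(X_1,X_2;Y)-C$ and using $|C-R_2|^+ +R_2=\max\{C,R_2\}$, the target (\ref{eq: key sum bound plan}) collapses, after a one-line case split on the sign of $C-R_2$, to the two inequalities $I_P(X_2;Y)-I_P(X_1;X_2)\le R_2'(P)$ (the hypothesis) and $R_2\le R_2'(P)$, both of which hold.

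With (\ref{eq: key sum bound plan}) in hand I would invoke Lemma \ref{th: enlargement Theorem} with the split $R_{1,a}=R_1'(P)$ and $R_{1,b}=R_1-R_1'(P)>0$: the candidate pair $(R_{1,a},R_{1,b}+R_2)=(R_1'(P),\,R_1+R_2-R_1'(P))$ satisfies $R_{1,a}\le R_1'(P)$, satisfies $R_{1,b}+R_2\le R_2'(P)$ by (\ref{eq: key sum bound plan}), and satisfies the binning sum-rate bound $R_{1,a}+(R_{1,b}+R_2)=R_1+R_2\le\min_{\calL_0^{bin}}I_{\tilde P}(X_1,X_2;Y)$ again by the inclusion $\calL_0^{bin}(P)\subseteq\calL_0^{sup}(P)$. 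Thus this pair lies in $\tilde{\calR}_{cog}^{bin}(P)$, and the enlargement of Lemma \ref{th: enlargement Theorem} returns the original $(R_1,R_2)$ as a point of $\calR_{cog}^{bin,*}(P)$, i.e.\ as a rate pair achievable by binning.

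The step I expect to be the main obstacle is the sum-rate inequality (\ref{eq: key sum bound plan}); it is the only place the hypothesis is used, and it is indispensable, since for small $R_2$ the cognitive superposition code can drive $R_1$ strictly above $R_1'(P)$, so that the inclusion genuinely relies on the rate-splitting of Lemma \ref{th: enlargement Theorem} rather than on a direct constraint-by-constraint comparison. The remaining care is simply in checking that the case split in the derivation of (\ref{eq: key sum bound plan}) closes on both branches.
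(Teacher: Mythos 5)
Your proof is correct but follows a genuinely different route from the paper's, and the difference is worth spelling out. The paper does no case split and never invokes Lemma \ref{th: enlargement Theorem}: it shows directly that under the hypothesis the extra binning constraint $R_1\le R_1'(P)$ in (\ref{eq: bin constraints}) is dominated by the superposition constraint $R_1\le R_1''(P,R_2)$. Writing $R_2=I_P(X_2;Y)-I_P(X_1;X_2)+\Delta$ with $\Delta\ge 0$, it bounds $R_1''(P,R_2)\le \min_{\tilde P\in\calL_1(P)}\bigl[I_{\tilde P}(X_1;Y)+|I_{\tilde P}(X_2;Y|X_1)-R_2+\Delta|^+\bigr]=\min_{\tilde P\in\calL_1(P)}I_{\tilde P}(X_1;Y,X_2)=R_1'(P)$, using $\calL_1(P)\subseteq\calL_0(P)$, monotonicity of $|\cdot|^+$, and the identity $I_{\tilde P}(X_2;Y|X_1)=I_P(X_2;Y)-I_P(X_1;X_2)+I_{\tilde P}(X_1;X_2|Y)$ valid for $\tilde P\in\calL_1(P)$ --- the same identity behind your lower bound on $C$. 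On the range where that computation applies, namely $R_2\ge I_P(X_2;Y)-I_P(X_1;X_2)$, your ``substantive case'' $R_1>R_1'(P)$ simply cannot occur, and the inclusion becomes a verbatim constraint-by-constraint check into $\calR_{cog}^{bin}(P)$ itself, with no rate transfer. What your route buys is the remaining range $R_2<I_P(X_2;Y)-I_P(X_1;X_2)$ (relevant when this quantity is positive), which the paper's chain of inequalities does not address and where $R_1''(P,R_2)$ can genuinely exceed $R_1'(P)$; your sum-rate bound $R_1+R_2\le R_1'(P)+R_2'(P)$ and the split $R_{1,a}=R_1'(P)$ are sound there, but note that they place the point in the enlarged region $\calR_{cog}^{bin,*}(P)$ of Lemma \ref{th: enlargement Theorem}, not in $\calR_{cog}^{bin}(P)$ as literally defined in (\ref{eq: bin constraints}). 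So what you have actually proved is $\calR_{cog}^{sup}(P)\subseteq\calR_{cog}^{bin,*}(P)$ --- the operationally meaningful statement, and the one consistent with the surrounding discussion, which compares $\calR_{cog}^{sup}$ against $\calR_{cog}^{bin,*}$ --- whereas the paper's argument, where it applies, yields the literal set inclusion. Your verification of the $\calL_0^{bin}(P)\subseteq\calL_0^{sup}(P)$ monotonicity and of the feasibility of the minimizer $\hat P\in\calL_1(P)\subseteq\calL_0(P)$ in the definition of $R_1''(P,R_2)$ is correct.
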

\begin{proof}
It is argued that if $R_2'(P)\geq I_P(X_2;Y)-I_P(X_1;X_2)$, the constraint $R_1\leq R_1'(P) $ in (\ref{eq: bin constraints}) is looser than $R_1\leq R_1''(P,R_2)$. To realize this, let 
$R_2=I_P(X_2;Y)-I_P(X_1;X_2)+\Delta$ where $\Delta\geq0$. We have
\begin{flalign}
R_1''(P,R_2)=&\min_{\tilde{P}\in \calL_0(P)}  I_{\tilde{P}}(X_1;Y)+| I_{\tilde{P}}(X_2;Y|X_1)-R_2|^+\nonumber\\
\leq&\min_{\tilde{P}\in \calL_0(P)}  I_{\tilde{P}}(X_1;Y)+| I_{\tilde{P}}(X_2;Y|X_1)-R_2+\Delta|^+\nonumber\\
\stackrel{(a)}{\leq} &\min_{\tilde{P}\in \calL_1(P)}  I_{\tilde{P}}(X_1;Y)+| I_{\tilde{P}}(X_2;Y|X_1)-R_2+\Delta|^+\nonumber\\
\stackrel{(b)}{=}& \min_{\tilde{P}\in \calL_1(P)}  I_{\tilde{P}}(X_1;Y,X_2) \nonumber\\
=& R_1'(P)
\end{flalign}
where $(a)$ follows since $ \calL_1(P) \subseteq \calL_0(P)$, and $(b)$ follows since for all $\tilde{P}\in   \calL_1(P)$, 
\begin{flalign}
&I_{\tilde{P}}(X_2;Y|X_1) \nonumber\\
=&I_{\tilde{P}}(X_2;Y)-I_{\tilde{P}}(X_2;X_1)+I_{\tilde{P}}(X_2;X_1|Y)\nonumber\\
=&I_P(X_2;Y)-I_P(X_2;X_1)+I_{\tilde{P}}(X_2;X_1|Y)
\end{flalign}
\end{proof}
In fact, Theorem \ref{cr: bin eq sup} generalizes the fact that binning performs as well as superposition coding in the matched case, since in the matched case one has $R_2'(P)\geq I_P(X_2;Y)-I_P(X_1;X_2)$.

\end{itemize}

\section{The Mismatched Single-User Channel}\label{sc: The Implication for the Single-User Channel}

This section shows that achievable rates for the mismatched single-user DMC can be derived from the maximal sum-rate of an appropriately chosen mismatched cognitive MAC. 

Similar to the definitions in  \cite{Lapidoth96}, consider the single-user mismatched DMC $P_{Y|X}$ with input alphabet $\calX$ and decoding metric $q(x,y)$. Let $\calX_1$ and $\calX_2$ be finite alphabets and let $\phi$ be a given mapping
$\phi:\; \calX_1\times \calX_2\rightarrow \calX$. We will study the rate-region of the mismatched cognitive MAC with input alphabets $\calX_1,\calX_2$ and output alphabet $\calY$, whose input-output relation is given by 
\begin{flalign}\label{eq: induced MAC}
P_{Y|X_1,X_2}(y|x_1,x_2)&=P_{Y|X}(y|\phi(x_1,x_2)),
\end{flalign}
where the right hand side is the probability of the output of the single-user channel to be $y$ given that its input is $\phi(x_1,x_2)$. The decoding metric $q(x_1,x_2,y)$ of the mismatched cognitive MAC is defined in terms of that of the single-user channel: 
\begin{flalign}
q(x_1,x_2,y)&=q(\phi(x_1,x_2),y).
\end{flalign}
The resulting mismatched cognitive MAC will be referred to as the cognitive MAC induced by the single-user channel, or more specifically, induced by $\left(P_{Y|X},q(x,y),\calX_1,\calX_2,\phi\right)$.
Note that, in fact, $X_1$ and $X_2$ can be regarded as auxiliary random variables for the original single-user channel $P_{Y|X}$. 

In  \cite[Theorem 4]{Lapidoth96}, it is shown that the mismatch capacity of the single-user channel
 is lower-bounded by $R_1+R_2$ for any pair $(R_1,R_2)$ that, for some mapping
$\phi$ and for some distributions $P_{X_1}$ and $P_{X_2}$ satisfy
\begin{flalign}\label{eq: single user Lapidoth}
R_1<&\min_{\underset{f_{X_1,X_2}=f_{X_1}f_{X_2}}{f\in \calD_{(1)}} } I_f(X_1;Y|X_2)\nonumber\\
R_2<&\min_{\underset{f_{X_1,X_2}=f_{X_1}f_{X_2}}{f\in \calD_{(2)}} } I_f(X_2;Y|X_1)\nonumber\\
R_1+R_2<&\min_{\underset{f_{X_1,X_2}=f_{X_1}f_{X_2}}{f\in \calD_{(0)}} } I_f(X_1,X_2;Y),
\end{flalign}
where $\calD_{(i)},i=0,1,2$ are defined in (\ref{eq: calD dfn}) and $P_{X_1,X_2,Y}=P_{X_1}P_{X_2}P_{Y|\phi(X_1,X_2)}$. 

The proof of \cite[Theorem 4]{Lapidoth96} is based on expurgating the product codebook of the MAC (\ref{eq: induced MAC}) containing $2^{n(R_1+R_2)}$ codewords $\bv(i,j)=(\bx_1(i),\bx_2(j))$ and only keeping the $\bv(i,j)$'s that are composed of $\bx_1(i),\bx_2(j)$ which are jointly $\epsilon$-typical with respect to the product p.m.f.\ $P_{X_1}P_{X_2}$. It is shown that the expurgation causes a negligible loss of rate and therefore makes is possible to consider minimization over product measures in (\ref{eq: single user Lapidoth}). 

It is easy to realize that in the cognitive MAC case as well, if $(R_1,R_2)$ is an achievable rate-pair for the induced cognitive MAC, $R_1+R_2$ is an achievable rate for the inducing single-user channel. 
While the users of the non-cognitive induced MAC of \cite{Lapidoth96} exercise a limited degree of cooperation (by expurgating the appropriate codewords of the product codebook), the induced cognitive MAC introduced here enables a much higher degree of cooperation between users. 
There is no need to expurgate codewords for cases of either
 superposition coding or random binning, since the codebook generation guarantees that for all $(i,j)$, $\left(\bx_1(i),\bx_2(i,j)\right)$ lies in the desired joint type-class $T(P_{X_1,X_2})$.

Let $\calR_{cog}(P)=\calR_{cog}^{sup}(P)\cup\calR_{cog}^{bin,*}(P)$. 
For convenience the dependence of $\calR_{cog}(P)$ on $q$ is made explicit and is denoted by $\calR_{cog}(P_{X_1,X_2,Y},q(x_1,x_2,y))$, \begin{theorem}\label{th: single user inner bound}
For all finite $\calX_1,\calX_2$, $P_{X_1,X_2}\in\calP(\calX_1,\calX_2)$ and $\phi:\calX_1\times\calX_2\rightarrow\calX$, 
the capacity of the single-user mismatched DMC $P_{Y|X}$ with decoding metric $q(x,y)$ is lower bounded by the sum-rate resulting from $\calR_{cog}(P_{X_1,X_2}P_{Y|\phi(X_1,X_2)},q(\phi(x_1,x_2),y))$. 
\end{theorem}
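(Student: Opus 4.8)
The plan is to reduce the single-user decoding problem to that of the induced cognitive MAC, exploiting the fact that applying $\phi$ coordinate-wise turns a cognitive MAC code into a single-user code that preserves both the channel law and the accumulated metric. First I would fix a pair $(R_1,R_2)\in\calR_{cog}(P)$ with $P=P_{X_1,X_2}P_{Y|\phi(X_1,X_2)}$. Since $\calR_{cog}(P)=\calR_{cog}^{sup}(P)\cup\calR_{cog}^{bin,*}(P)$ and both $\calR_{cog}^{sup}(P)$ (Theorem \ref{th: achievable II cognitive mismatch II}) and $\calR_{cog}^{bin,*}(P)$ (Theorem \ref{th: achievable region binning}) lie inside the capacity region of the induced cognitive MAC, the pair $(R_1,R_2)$ is achievable for that MAC. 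Hence there is a sequence of cognitive MAC codebooks $\{\bx_1(i)\}$, $\{\bx_2(i,j)\}$ of rates $R_1,R_2$ whose average error probability $\bar{P}_{e,n}^{MAC}=\mbox{Pr}\{(\hat W_1,\hat W_2)\neq(W_1,W_2)\}$ tends to zero, and, as noted before the statement, no expurgation is needed since by construction $(\bx_1(i),\bx_2(i,j))\in T(P_{X_1,X_2})$ for every $(i,j)$.

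Next I would build the single-user code: index the $2^{n(R_1+R_2)}$ messages by the pairs $(i,j)$ and set $\bv(i,j)=\phi(\bx_1(i),\bx_2(i,j))$, the coordinate-wise image under $\phi$; this is a rate-$(R_1+R_2)$ block code for $P_{Y|X}$. Two identities make the reduction exact. By (\ref{eq: induced MAC}) the output of $P_{Y|X}$ fed with $\bv(i,j)$ has exactly the law of the induced MAC output fed with $(\bx_1(i),\bx_2(i,j))$, so the two coding problems share the same channel. And because the MAC metric is $q(x_1,x_2,y)=q(\phi(x_1,x_2),y)$, the additive single-user metric satisfies $q_n(\bv(i,j),\by)=\frac1n\sum_k q(\phi(x_{1,k}(i),x_{2,k}(i,j)),y_k)=q_n(\bx_1(i),\bx_2(i,j),\by)$. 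Consequently the single-user decoder that maximizes $q_n(\bv(\cdot),\by)$ selects the same pair $(i,j)$ as the cognitive MAC decoder (\ref{eq: decision rule cognitive mac}), and a tie in one problem is a tie in the other; treating ties as errors in both, the single-user error event coincides verbatim with the MAC error event. Therefore $\bar{P}_{e,n}^{SU}=\bar{P}_{e,n}^{MAC}\to0$, so the rate $R_1+R_2$ is achievable over $P_{Y|X}$ with metric $q(x,y)$, and taking the supremum over $(R_1,R_2)\in\calR_{cog}(P)$ yields the claimed lower bound on the mismatch capacity.

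I do not anticipate a genuine obstacle, since the construction is a deterministic relabeling; the only point requiring care is the degenerate case in which $\phi$ is non-injective so that distinct index pairs yield identical single-user codewords. This causes no difficulty: such a collision forces $q_n(\bx_1(i),\bx_2(i,j),\by)=q_n(\bx_1(i'),\bx_2(i',j'),\by)$ for every $\by$ and hence a tie, which is already counted as an error in the cognitive MAC problem, so the exact equality $\bar{P}_{e,n}^{SU}=\bar{P}_{e,n}^{MAC}$ is unaffected. The main substance is thus not the reduction itself but the prior achievability Theorems \ref{th: achievable II cognitive mismatch II} and \ref{th: achievable region binning}, which supply the region $\calR_{cog}(P)$ that is fed into it.
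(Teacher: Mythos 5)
Your proof is correct and follows essentially the same route the paper takes: the paper proves Theorem \ref{th: single user inner bound} only implicitly, via the remark preceding it that any achievable pair $(R_1,R_2)$ for the induced cognitive MAC yields an achievable sum-rate $R_1+R_2$ for the single-user channel, with no expurgation needed since all pairs $(\bx_1(i),\bx_2(i,j))$ lie in $T(P_{X_1,X_2})$ by construction. You simply make that reduction explicit (preservation of the channel law, of the additive metric, and of ties under the coordinate-wise map $\phi$), which is a faithful filling-in of the paper's argument rather than a different one.
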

While Theorem \ref{th: single user inner bound} may not improve the rate (\ref{eq: single user Lapidoth}) achieved in \cite[Theorem 4]{Lapidoth96} in optimizing over all $\left(\calX_1,\calX_2,\phi,P_{X_1},P_{X_2}\right)$, it can certainly improve the achieved rates for given $\left(\phi,P_{X_1},P_{X_2}\right)$ as demonstrated in Section \ref{sc: discussion}, and thereby may reduce the computational complexity required to find a good code. 

Next, it is demonstrated how superposition coding can be used to achieve the rate of Theorem \ref{th: single user inner bound}. 
Consider the region of rate-pairs $(R_1,R_2)$ which satisfy
\begin{flalign}\label{eq: sup constraints single user reversed}
R_1&\leq r_1(P)\triangleq \min_{\tilde{P}\in \calL_1(P)} I_{\tilde{P}}(X_1;Y|X_2)\nonumber\\
R_2&\leq r_2(P,R_1)\triangleq \min_{\tilde{P}\in \calL_0(P)}  I_{\tilde{P}}(X_2;Y)+ |I_{\tilde{P}}(X_1;Y|X_2)-R_1|^+.
\end{flalign}
This region is obtained by reversing the roles of the users in (\ref{eq: sup constraints}), i.e., setting user $1$ as the cognitive one. 

We next show that the sum-rate resulting from $\calR_{cog}^{bin}(P)$ (\ref{eq: bin constraints}) is upper-bounded by the sum-rate resulting from the union of the regions (\ref{eq: sup constraints single user reversed}) and (\ref{eq: sup constraints}) that are achievable by superposition coding. To verify this, note that (\ref{eq: bin constraints}) is contained in the union of rate-regions
\begin{flalign}\label{eq: bin constraints single user}
\left\{ (R_1,R_2):\;    \begin{array}{l}
R_2\leq R_2'(P), \\
R_1\leq R_1''(P,R_2)\end{array}\right\}
\cup
\left\{ (R_1,R_2):\;    \begin{array}{l}
R_1\leq R_1'(P),\\
 R_2 \leq R_2''(P,R_1)
\end{array}\right\}, 
\end{flalign}
and the region on the l.h.s.\ is equal to $\calR_{cog}^{sup}$ (\ref{eq: sup constraints}). Now, let $(R_1^*,R_2^*)$ be the vertex point of the region on the r.h.s.\, i.e., the point which satisfies
\begin{flalign}
R_1^*&= \min_{\tilde{P}\in \calL_1(P)} I_{\tilde{P}}(X_1;Y,X_2)\nonumber\\
R_2^*&= \min_{\tilde{P}\in \calL_0(P)}  \bigg[I_{\tilde{P}}(X_2;Y)- I_{\tilde{P}}(X_2;X_1) + |I_{\tilde{P}}(X_1;Y,X_2)-R_1^*|^+\bigg].
\end{flalign}
By definition of $ \calL_1(P)$ and $ \calL_0(P)$, denoting $R_1^{**}\triangleq R_1^*-I_P(X_2;X_1)$ and $R_2^{**}\triangleq R_2^*+I_P(X_2;X_1)$ this yields
\begin{flalign}
R_1^{**}&= \min_{\tilde{P}\in \calL_1(P)} I_{\tilde{P}}(X_1;Y|X_2)\nonumber\\
R_2^{**}&= \min_{\tilde{P}\in \calL_0(P)}  \bigg[I_{\tilde{P}}(X_2;Y)+ |I_{\tilde{P}}(X_1;Y|X_2)-R_1^{**}|^+\bigg].
\end{flalign}
Since clearly $(R_1^{**},R_2^{**})$ lies in (\ref{eq: sup constraints single user reversed}) and since $R_1^{**}+R_2^{**}=R_1^*+R_2^*$, we obtain that the sum-rate resulting from 
(\ref{eq: sup constraints single user reversed}) is equal to that of the r.h.s.\ of (\ref{eq: bin constraints single user}).
Consequently, the sum-rate that can be achieved by the union of the two superposition coding schemes (with user $1$ the cognitive and user $2$ the non-cognitive and vice versa) is an upper bound on the sum-rate achievable by binning, and in Theorem \ref{th: single user inner bound}, one can replace $\calR_{cog}(P_{X_1,X_2}P_{Y|\phi(X_1,X_2)},q(\phi(x_1,x_2),y))$ with the union of the regions (\ref{eq: sup constraints single user reversed}) and (\ref{eq: sup constraints}). This yields the following corollary.
\begin{corollary}
For all finite $\calX_1,\calX_2$, $P_{X_1,X_2}\in\calP(\calX_1,\calX_2)$ and $\phi:\calX_1\times\calX_2\rightarrow\calX$, the capacity of the single-user mismatched DMC $P_{Y|X}$ with decoding metric $q(x,y)$ is lower bounded by the rate
\begin{flalign}
\max\left\{ R_2'(P)+R_1''(P, R_2'(P)) , r_1(P)+r_2(P,r_1(P))  \right\}, 
\end{flalign}
which is achievable by superposition coding, where $P=P_{X_1,X_2}\times P_{Y|\phi(X_1,X_2)}$, and the functions $r_1(P), r_2(P,R_1)$ are defined in (\ref{eq: sup constraints single user reversed}). 
\end{corollary}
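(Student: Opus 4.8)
The plan is to derive the corollary directly from Theorem~\ref{th: single user inner bound}, which asserts that the sum-rate $R_1+R_2$ of \emph{any} rate-pair in $\calR_{cog}(P)=\calR_{cog}^{sup}(P)\cup\calR_{cog}^{bin,*}(P)$ is an achievable rate for the inducing single-user channel. It therefore suffices to exhibit, for each of the two quantities inside the $\max$, a rate-pair achievable by a superposition scheme whose sum-rate equals that quantity, and then to confirm that restricting attention to superposition coding loses nothing relative to the full region in Theorem~\ref{th: single user inner bound}.

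First I would treat the term $R_2'(P)+R_1''(P,R_2'(P))$, the sum of the coordinates of the vertex $(R_1,R_2)=(R_1''(P,R_2'(P)),R_2'(P))$ of $\calR_{cog}^{sup}(P)$ in~(\ref{eq: sup constraints}), which is achievable by superposition coding with user~$2$ cognitive. To see that this vertex carries the maximal sum-rate of $\calR_{cog}^{sup}(P)$, note that for fixed $R_2\le R_2'(P)$ the largest admissible $R_1$ is $R_1''(P,R_2)$, so the sum-rate equals $g(R_2)=\min_{\tilde{P}\in\calL_0(P)}\big[R_2+I_{\tilde{P}}(X_1;Y)+|I_{\tilde{P}}(X_2;Y|X_1)-R_2|^+\big]$. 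Since $|I_{\tilde{P}}(X_2;Y|X_1)-R_2|^+$ has slope at least $-1$ in $R_2$, each bracketed term is nondecreasing, and a pointwise minimum of nondecreasing functions is nondecreasing; hence the maximum is attained at $R_2=R_2'(P)$. Applying Theorem~\ref{th: single user inner bound} to this point gives the first lower bound on capacity.

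Next I would treat $r_1(P)+r_2(P,r_1(P))$. Region~(\ref{eq: sup constraints single user reversed}) is the image of $\calR_{cog}^{sup}(P)$ under exchanging the two users' roles, hence is achievable by superposition coding with user~$1$ cognitive, and the same monotonicity argument places its maximal sum-rate at the vertex $(r_1(P),r_2(P,r_1(P)))$. Applying Theorem~\ref{th: single user inner bound} to the role-reversed induced MAC shows that $r_1(P)+r_2(P,r_1(P))$ is also achievable. Taking the larger of the two superposition sum-rates yields capacity $\ge\max\{R_2'(P)+R_1''(P,R_2'(P)),\,r_1(P)+r_2(P,r_1(P))\}$, with superposition codes witnessing both terms. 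To certify that this superposition-only bound equals the bound of Theorem~\ref{th: single user inner bound}, I would invoke the vertex-shift computation carried out just before the corollary: the enlargement of Lemma~\ref{th: enlargement Theorem} leaves the binning sum-rate unchanged (it only moves rate from user~$2$ to user~$1$), and setting $R_1^{**}=R_1^*-I_P(X_1;X_2)$, $R_2^{**}=R_2^*+I_P(X_1;X_2)$ maps the binning-vertex $(R_1^*,R_2^*)$ into~(\ref{eq: sup constraints single user reversed}) with the same sum-rate, so the binning sum-rate is dominated by the reversed superposition sum-rate.

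The main obstacle is precisely this last verification: one must check that the constant shift by $I_P(X_1;X_2)$ is exactly compatible with the $|\cdots|^+$ operations and with the constraint sets $\calL_1(P)$ and $\calL_0(P)$, so that the reversed-superposition vertex and the binning vertex genuinely share the same sum-rate rather than merely being comparable. This forces the mutual-information identities $I_{\tilde{P}}(X_1;Y,X_2)=I_{\tilde{P}}(X_1;Y|X_2)+I_P(X_1;X_2)$ on $\calL_1(P)$, and the analogous reduction on $\calL_0(P)$, to be applied termwise inside the minimizations, using that every $\tilde{P}$ in these sets retains the marginal $f_{X_1,X_2}=P_{X_1,X_2}$.
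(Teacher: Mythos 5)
Your proposal is correct and follows essentially the same route as the paper: both terms in the $\max$ are sum-rates of vertices of the two superposition regions (\ref{eq: sup constraints}) and (\ref{eq: sup constraints single user reversed}), each achievable via Theorem \ref{th: single user inner bound}, and the binning sum-rate is dominated via the vertex shift $(R_1^*,R_2^*)\mapsto(R_1^*-I_P(X_1;X_2),R_2^*+I_P(X_1;X_2))$ using $f_{X_1,X_2}=P_{X_1,X_2}$ on $\calL_1(P)$ and $\calL_0(P)$. Your explicit monotonicity argument (each bracketed term has slope at least $0$ in $R_2$, so the pointwise minimum is nondecreasing and the maximal sum-rate sits at the vertex) merely spells out a step the paper leaves implicit.
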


\section{Conclusion}\label{eq: Conclusion}

In this paper, two encoding methods for cognitive multiple access channels: a superposition coding scheme and a random binning scheme were analyzed. Tight single-letter expressions were obtained for the resulting error exponents of these schemes. The achievable regions were characterized and proofs were provided for the random coding converse theorems. While apparently neither of the schemes dominates the other, there are certain conditions under which each of the schemes is not inferior (in terms of reliably transmitted rates) compared to the other scheme for a given random coding distribution $P_{X_1,X_2}$. An example was also discussed for a cognitive MAC whose achievable region using random binning is strictly larger than that obtained by superposition coding. The matched case was also studied, in which the achievable regions of both superposition coding and random binning are equal to the capacity region, which is often strictly larger than the achievable region of the non-cognitive matched MAC. 

In certain cases, binning is more advantageous than superposition coding in terms of memory requirements: superposition coding requires the cognitive user to use a separate codebook for every possible codeword of the non-cognitive user, i.e., a collection of $2^{n(R_1+R_2)}$ codewords. Binning on the other hand allows encoder $2$ to decrease memory requirements to $2^{nR_1}+2^{n(R_2+I_P(X_1;X_2))}$ codewords at the cost of increased encoding complexity\footnote{In this case, the choice of $\bx_2(i,j)$ should not be arbitrary among the vectors that are jointly typical with $\bx_1(i)$ in the $j$-th bin, but rather, a deterministic rule, e.g., pick the jointly typical $\bx_2[k,j]$ with the lowest $k$.}. 

The achievability results were further specialized to obtain a lower bound on the mismatch capacity of the single-user channel by investigating a cognitive multiple access channel whose achievable sum-rate serves as a lower bound on the single-user channel's capacity. This generalizes Lapidoth's scheme \cite{Lapidoth96} for a single-user channel that is based on the non-cognitive MAC. 
While the users of the non-cognitive MAC of \cite{Lapidoth96} exercise a limited degree of cooperation (by expurgating the appropriate codewords of the product codebook), the  cognitive MAC introduced here allows for a much higher degree of cooperation between users. Neither superposition coding nor random binning requires expurgating codewords, since the codebook generation guarantees that all pairs of transmitted codewords lie in the desired joint type-class $T(P_{X_1,X_2})$. 
Additionally, Lapidoth's lower bound for the single-user channel requires optimizing over the parameters (auxiliary random variables alphabets and distributions), but
when the full optimization over the parameters is infeasible, the bound provided in this paper can be strictly larger and may reduce the computational complexity required to find a good code. We further show that by considering the two superposition schemes (with user $1$ being the cognitive and user $2$ the non-cognitive and with reversed roles) one can achieve a sum-rate that is at least as high as that of the random binning scheme.

\section{Acknowledgement}
The author would like to thank the anonymous reviewers of the International Symposium on Information Theory 2013 for their very helpful comments and suggestions, which helped improving this paper.

\appendix\label{sc: Proofs of the main results}
Throughout the proofs the method of types is used. For a survey of the method the reader is referred to \cite[Chapter 11.1]{CoverThomas2006}
. In particular inequalities involving type sizes, such as if $\hat{P}_{\bx,\by}= Q$ then, 
\begin{flalign}\label{eq: typesize}
c_n^{-1}2^{nH_Q(X|Y)}\leq |T(Q_{X,Y}|\by)|\leq 2^{nH_Q(X|Y)}, 
\end{flalign}
where $c_n=(n+1)^{(|\calX||\calY|-1)}$, i.e., $|T(Q_{X,Y}|\by)|\doteq 2^{nH_Q(X|Y)}$. Additionally, if $A$ is an event that can be expressed as a union over type-classes of $\calX$, and $\bX$ is a random $n$ vector over $\calX^n$, since the number of types grows polynomially with $n$, we have
\begin{flalign}\label{eq: Sanov}
&\max_{\tilde{P}\in A} \mbox{Pr}\left(\bX\in T(\tilde{P})\right)\leq  \mbox{Pr}\left(\bX\in A\right)
\leq c_n\max_{\tilde{P}\in  A}  \mbox{Pr}\left(\bX\in T(\tilde{P})\right),
\end{flalign}
i.e., $\mbox{Pr}\left(\bX\in A\right)\doteq \max_{\tilde{P}\in A}  \mbox{Pr}\left(\bX\in T(\tilde{P})\right)$. Finally, if $\bY$ is conditionally i.i.d.\ given a deterministic vector $\bx$ with $P(Y_i=y|X_i=x)\sim Q_{Y=y|X=x}$, then for $\hat{P}_X=\hat{P}_{\bx}$
\begin{flalign}\label{eq: memoryless channel output type}
Q^n\left( \bY\in T(\hat{P}_{X,Y}|\bx) |\bX=\bx\right) \doteq 2^{-nD(\hat{P}\| Q|\hat{P}_{\bx})},
\end{flalign}
where
\begin{flalign}
D(\hat{P}\| Q|\hat{P}_{\bx})&= \sum_{x,y} \hat{P}_{X,Y}(x,y)\log\frac{\hat{P}(y|x)}{Q(y|x)}.
\end{flalign}

Recall the definitions of the sets of p.m.f.'s (\ref{eq: sets definitions}). Next, we define similar sets of empirical p.m.f.'s. 
 Let $Q\in\calP_n(\calX_1\times\calX_2\times\calY)$ be given. Define the following sets of p.m.f.'s that will be useful in what follows
\begin{flalign}
&\calK_n(Q) \triangleq\{f\in\calP_n(\calX_1\times\calX_2\times\calY):\; f_{X_1,X_2}=Q_{X_1,X_2}\} \nonumber\\
&\calG_{q,n}(Q) \triangleq \{f\in\calK_n(Q):\; \bE_f\{q(X_1,X_2,Y)\}\geq \bE_Q\{q(X_1,X_2,Y)\} \}\nonumber\\
&\calL_{1,n}(Q) \triangleq  \{f\in \calG_{q,n}(Q) :\; f_{X_2,Y}=Q_{X_2,Y}\} \label{eq: empirical definitions 3}\\
&\calL_{2,n}(Q) \triangleq  \{f\in \calG_{q,n}(Q) :\; f_{X_1,Y}=Q_{X_1,Y}\}\label{eq: empirical definitions 4}\\
&\calL_{0,n}(Q) \triangleq  \{f\in \calG_{q,n}(Q) :\; f_{Y}=Q_{Y}\} .\label{eq: empirical definitions 5}
\end{flalign}

\subsection{Proof of Theorem \ref{th: achievable error exponents II cognitive mismatch II} }\label{sc: Outline of the proof of Theorem 2}

Assume without loss of generality that the transmitted messages are $(m_1,m_2)=(1,1)$, and let $W^n$ stand for $P_{Y|X}^n$.
We have
\begin{flalign}
\bar{P}_{e,2}^{sup} =&\sum_{(\bx_1,\bx_2)\in T(P_{X_1,X_2}),\by}\frac{W^n(\by|\bx_1,\bx_2)}{|T(P_{X_1,X_2})|}e(\hat{P}_{\bx_1,\bx_2,\by})\label{eq: eq50}
\end{flalign}
where\footnote{We denote $e(\hat{P}_{\bx_1,\bx_2,\by})$ rather than $e(\bx_1,\bx_2,\by)$ because it is easily verified that the average error probability conditioned on $(\bx_1,\bx_2,\by)$ is a function of the joint empirical measure $\hat{P}_{\bx_1,\bx_2,\by}$.} $e(\hat{P}_{\bx_1,\bx_2,\by})$ is the average probability of $\{\hat{W}_1=1,\hat{W}_2\neq 1\}$ given $\{\bX_1,\bX_2,\bY)=(\bx_1,\bx_2,\by)\}$. 
Since the decoder successfully decodes $W_2$ only if $q_n(\bx_1(1),\bx_2(1),\by)> q_n(\bx_1(1),\bx_2(j),\by)$ for all $j\neq 1$ and since $e(\hat{P}_{\bx_1,\bx_2,\by})$ can be regarded as the probability of at least one "success" in $M_2-1$ Bernoulli trials, we have 
\begin{flalign}
e(\hat{P}_{\bx_1,\bx_2,\by})=&1-\left[1-a(\hat{P}_{\bx_1,\bx_2,\by})\right]^{M_2-1},
\end{flalign}
where
\begin{flalign}\label{eq: eq44}
a(\hat{P}_{\bx_1,\bx_2,\by})=\sum_{\bx_2'\in T(P_{X_1,X_2}|\bx_1):\; \frac{q_n(\bx_1,\bx_2',\by)}{q_n(\bx_1,\bx_2,\by)}\geq 1}\frac{1}{|T(P_{X_2|X_1})|},
\end{flalign}
which is the probability that $\bX_2'$ drawn uniformly over $T(P_{X_1,X_2}|\bx_1)$ will yield a higher metric than the transmitted codeword, i.e., $q_n(\bx_1,\bX_2',\by)\geq q_n(\bx_1,\bx_2,\by)$.
From Lemma 1 in \cite{SomekhBaruchMerhav07} we know that for $a\in[0,1]$, one has 
\begin{flalign}\label{eq: doteq Lemma}
\frac{1}{2}\min\{1,Ma\}\leq 1-\left[1-a\right]^M\leq \min\{1,Ma\}.
\end{flalign}
Consequently,
\begin{flalign}
e(\hat{P}_{\bx_1,\bx_2,\by})\doteq &\min\left \{1,M_2a(\hat{P}_{\bx_1,\bx_2,\by})\right\}\nonumber\\
=& 2^{-n\left| -\frac{1}{n}\log a(\hat{P}_{\bx_1,\bx_2,\by})-R_2\right|^+}.\label{eq: eq52}
\end{flalign}
Now, by noting that the summation in (\ref{eq: eq44}) is over conditional type-classes of $\bx_2'$ given $(\bx_1,\by)$, such that $\bE_{\hat{P}_{\bx_1,\bx_2',\by}}\{q(X_1,X_2,Y)\}\geq  \bE_{\hat{P}_{\bx_1,\bx_2,\by}}\{q(X_1,X_2,Y)\}$ and $\hat{P}_{\bx_1,\bx_2}=\hat{P}_{\bx_1,\bx_2'}$ as $\bx_2,\bx_2'$ are both drawn conditionally independent given $\bx_1$, and using (\ref{eq: typesize}) and (\ref{eq: Sanov}) we get
\begin{flalign}
&a(\hat{P}_{\bx_1,\bx_2,\by})\nonumber\\
\doteq &\max_{\tilde{P}\in \calL_{2,n}(\hat{P}_{\bx_1,\bx_2,\by}) }\frac{| T(\tilde{P}_{X_2|X_1,Y})|}{|T(P_{X_2|X_1})|}\nonumber\\
\doteq & 2^{-n\min_{\tilde{P}\in \calL_{2,n}(\hat{P}_{\bx_1,\bx_2,\by})} I_{\tilde{P}}(X_2;Y|X_1) },\label{eq: eq54}
\end{flalign}
where $ \calL_{2,n}(\hat{P}_{\bx_1,\bx_2,\by})$ is defined in (\ref{eq: empirical definitions 4}).
Using (\ref{eq: typesize}) and (\ref{eq: memoryless channel output type}) and gathering (\ref{eq: eq50}), (\ref{eq: eq52}) and (\ref{eq: eq54}) this yields 
\begin{flalign}
 \bar{P}_{e,2}^{sup} \doteq &2^{
 -n\min_{P'\in\calK_n(P)}\left(D(P'\| P)+\min_{\tilde{P}\in \calL_{2,n}(P')}\left| I_{\tilde{P}}(X_2;Y|X_1) -R_2\right|^+\right) } \nonumber\\
 \doteq &2^{-nE_2(P,R_2)}
 \end{flalign}
where the last step follows since by continuity of $\left| I_{\tilde{P}}(X_2;Y|X_1) -R_2\right|^+$ in $\tilde{P}$ for sufficiently large $n$ we can replace the minimization over empirical p.m.f.'s $\calL_{2,n}(P')$ with a minimization over $\calL_{2}(P')$, and in a similar manner, we can replace the minimization over $\calK_n(P)$ with a minimization over $\calK(P) $ and obtain exponentially equivalent expressions.
 This concludes the proof of (\ref{eq: P_e2 expression a}).
 
Next, to compute the average error probability of the event of erroneously decoding $W_1$, i.e., $\hat{W}_1\neq 1$, we have
\begin{flalign}
\bar{P}_{e,1}^{sup} =&\sum_{(\bx_1,\bx_2)\in T(P_{X_1,X_2}),\by}\frac{W^n(\by|\bx_1,\bx_2)}{|T(P_{X_1,X_2})|}\nu(\hat{P}_{\bx_1,\bx_2,\by})\label{eq: still valid for binning 2}
\end{flalign}
where 
\begin{flalign}\label{eq: still valid}
&\nu(\hat{P}_{\bx_1,\bx_2,\by})=\sum_{T(\tilde{P}_{X_1,X_2,Y}|\by): \frac{q_n(\bx_1,\bx_2',\by)}{q_n(\bx_1,\bx_2,\by)}\geq 1, \tilde{P}_{X_1,X_2}=P_{X_1,X_2}} \nonumber\\
&\times \mbox{Pr}\left\{\exists i\neq 1,j:(\bX_1(i),\bX_2(i,j))\in T(\tilde{P}_{X_1,X_2,Y}|\by)\right\}.
\end{flalign}
Now, fix some $m\in\{2,...,M_1\}$ and note that $\nu(\bx_1,\bx_2,\by)$ can be regarded as the probability of at least one "success" in $M_1-1$ Bernoulli trials, i.e., 
\begin{flalign}\label{eq: Bernoulli trials}
&  \mbox{Pr}\left\{\exists i\neq 1,j:(\bX_1(i),\bX_2(i,j))\in T(\tilde{P}_{X_1,X_2,Y}|\by)\right\}\nonumber\\
=& 1-\left[\mbox{Pr}\left\{\not \exists j:\; (\bX_1(m),\bX_2(m,j))\in T(\tilde{P}_{X_1,X_2,Y}|\by)\right\} \right]^{M_1-1}.
\end{flalign}
The event $\left\{\not \exists j:\; (\bX_1(m),\bX_2(m,j))\in T(\tilde{P}_{X_1,X_2,Y}|\by)\right\}$ is the union of two disjoint events
\begin{itemize}
\item $A\triangleq\{ \bX_1(m)\notin T(\tilde{P}_{X_1,Y}|\by) \}$
\item $B\triangleq \bigg\{\bX_1(m)\in T(\tilde{P}_{X_1,Y}|\by)$ and  
$\not\exists j:\; (\bX_1(m),\bX_2(m,j))\in T(\tilde{P}_{X_1,X_2,Y}|\by)\bigg\}$.
\end{itemize}
Since $\bX_1(m)$ is drawn uniformly over $T(P_{X_1})=T(\tilde{P}_{X_1})$ we have from (\ref{eq: typesize})
\begin{flalign}
\mbox{Pr}\left\{A\right\}&= 1-\frac{|T(\tilde{P}_{X_1,Y}|\by)|}{|T(\tilde{P}_{X_1})|}\doteq 1-2^{-nI_{\tilde{P}}(X_1;Y)}
\end{flalign}
and 
\begin{flalign}
\mbox{Pr}\left\{B\right\}&= \frac{T(\tilde{P}_{X_1,Y}|\by)}{|T(\tilde{P}_{X_1})|}\cdot\left[ 1-\frac{|T(\tilde{P}_{X_1,X_2,Y}|\bx_1,\by)|}{|T(P_{X_1,X_2}|\bx_1)|} \right]^{M_2-1}\nonumber\\
&\doteq 2^{-nI_{\tilde{P}}(X_1;Y)}\cdot\left[ 1-2^{-nI_{\tilde{P}}(X_2;Y|X_1) } \right]^{M_2}.
\end{flalign}
Therefore, since the events are disjoint
\begin{flalign}\label{eq: added equation for brevity}
\mbox{Pr}\left\{A\cup B\right\}&= \mbox{Pr}\left\{A\right\}+\mbox{Pr}\left\{B\right\}   \nonumber\\
&\doteq 1-2^{-nI_{\tilde{P}}(X_1;Y)}+2^{-nI_{\tilde{P}}(X_1;Y)}\cdot\left[ 1-2^{-nI_{\tilde{P}}(X_2;Y|X_1) } \right]^{M_2} \nonumber\\
&= 1-2^{-nI_{\tilde{P}}(X_1;Y)}\left[1-\left[ 1-2^{-nI_{\tilde{P}}(X_2;Y|X_1) } \right]^{M_2}\right] \nonumber\\
&\doteq 1-2^{-nI_{\tilde{P}}(X_1;Y)+| I_{\tilde{P}}(X_2;Y|X_1)-R_2|^+ },
\end{flalign}
where the last step follows from (\ref{eq: doteq Lemma}) applied to $a=2^{-nI_{\tilde{P}}(X_2;Y|X_1) }$, and from (\ref{eq: Bernoulli trials}) we have 
\begin{flalign}
&  \mbox{Pr}\left\{\exists i\neq 1,j:(\bX_1(i),\bX_2(i,j))\in T(\tilde{P}_{X_1,X_2|Y}|\by)\right\} \nonumber\\
\doteq&  1-\left[1- 2^{-nI_{\tilde{P}}(X_1;Y)+| I_{\tilde{P}}(X_2;Y|X_1)-R_2|^+ }\right]^{M_1}\label{eq: this is the equation i need}\\
\doteq&2^{-n\left| I_{\tilde{P}}(X_1;Y)+| I_{\tilde{P}}(X_2;Y|X_1)-R_2|^+ -R_1\right|^+},\label{eq: end of citation}
\end{flalign}
which follows from (\ref{eq: doteq Lemma}) applied to 
\noindent $a=2^{-nI_{\tilde{P}}(X_1;Y)+| I_{\tilde{P}}(X_2;Y|X_1)-R_2|^+ }$.

Now, by noting that the summation in (\ref{eq: still valid}) is over conditional type-classes of $(\bx_2',\bx_1')$ given $\by$, such that $\bE_{\hat{P}_{\bx_1',\bx_2',\by}}\{q(X_1,X_2,Y)\}\geq  \bE_{\hat{P}_{\bx_1,\bx_2,\by}}\{q(X_1,X_2,Y)\}$ 
 and $\hat{P}_{\bx_1',\bx_2'}=\hat{P}_{\bx_1,\bx_2}$, and using (\ref{eq: Sanov}) we get
\begin{flalign}
&\nu(\hat{P}_{\bx_1,\bx_2,\by})\nonumber\\
\doteq &\max_{\tilde{P}\in \calL_{0,n}(\hat{P}_{\bx_1,\bx_2,\by}) } 2^{-n\left| I_{\tilde{P}}(X_1;Y)+| I_{\tilde{P}}(X_2;Y|X_1)-R_2|^+ -R_1\right|^+}\label{eq: eq5adkjbvkjbv}
\end{flalign}
where $ \calL_{0,n}(\hat{P}_{\bx_1,\bx_2,\by})$ is defined in (\ref{eq: empirical definitions 5}). 

Using (\ref{eq: memoryless channel output type}) and gathering (\ref{eq: still valid for binning 2}), (\ref{eq: eq5adkjbvkjbv}) this yields 
\begin{flalign}\label{eq: last continuity step}
 \bar{P}_{e,1}^{sup} \doteq & 2^{-n\min_{P'\in \calK_n(P)} \left(D(P'\|P)+\min_{\tilde{P}\in \calL_{0,n}(P')}  \left| I_{\tilde{P}}(X_1;Y)+| I_{\tilde{P}}(X_2;Y|X_1)-R_2|^+ -R_1\right|^+ \right)}\nonumber\\
 \doteq & 2^{-nE_1(P,R_1,R_2)}.
 \end{flalign}
where the last step follows since by continuity of $\left| I_{\tilde{P}}(X_1;Y)+| I_{\tilde{P}}(X_2;Y|X_1)-R_2|^+ -R_1\right|^+$ in $\tilde{P}$, for sufficiently large $n$ we can replace the minimization over empirical p.m.f.'s $ \calL_{0,n}(P')$ with a minimization over $ \calL_{0}(P')$, and similarly, we can replace the minimization over $\calK_n(P)$ with a minimization over $\calK(P) $ and obtain exponentially equivalent expressions.

\subsection{Proof of Theorem \ref{th: achievable error exponents II cognitive mismatch II binning}}\label{sc:Proof of Theorem 3}

The first observation similarly to \cite{Gelfand-Pinsker-1980}, is that the probability that for given $(m_1,m_2)$, no $k\in \{1,...,2^{n\gamma}\}$ exists such that $(\bx_1(m_1),\bx_2[k,m_2])\in T(P_{X_1,X_2})$ vanishes super-exponentially fast provided that $\gamma\geq I_P(X_1;X_2)+\epsilon$ for some $\epsilon>0$ since
\begin{flalign}\label{eq: GP property}
&\mbox{Pr}\left\{\not\exists k:\; (\bX_1(m_1),\bX_2[k,m_2])\in T(P_{X_1,X_2}) \right\}\nonumber\\
= &\left(1-\frac{ |T(P_{X_2|X_1})| }{|T(P_{X_2})|}\right)^{2^{n\gamma}}\nonumber\\
\doteq &\left(1-2^{-nI_P(X_1;X_2)}\right)^{2^{n\gamma}}\nonumber\\
\leq &\exp\{-2^{n(\gamma-I_P(X_1;X_2))}\}.
\end{flalign}
Moreover, from the union bound over $(m_1,m_2)$ we have that the probability that there exists $(m_1,m_2)$ such that $ (\bx_1(m_1),\bx_2[k,m_2])\notin T(P_{X_1,X_2})$ for all $k$ vanishes super exponentially fast, therefore, we have 
\begin{flalign}\label{eq: GP observation}
\mbox{Pr}\left\{\bX_2(m_1,m_2)=\tilde{\bx}_2 |\bX_1(m_1)=\tilde{\bx}_1\right\}\doteq \frac{1\{ \tilde{\bx}_2\in T(P_{X_1,X_2} | \tilde{\bx}_1)\}}{|T(P_{X_1,X_2} | \tilde{\bx}_1)|},
\end{flalign}
i.e., uniform in the appropriate conditional type-class.

Assume without loss of generality that $\bx_2(1,1)=\bx_2[1,1]$. 
As a direct consequence of  (\ref{eq: GP observation}), $\bar{P}_{e,2}^{bin}$ can be calculated similarly to its calculation for the superposition coding scheme yielding 
\begin{flalign}
\bar{P}_{e,2}^{bin} \doteq  2^{-nE_2(P,R_2)}.
\end{flalign}
 The calculation of $\bar{P}_{e,1}^{bin}$ on the other hand, differs from that of the superposition coding since the last step of (\ref{eq: Bernoulli trials}) is no longer valid as it does not correspond to $2^{nR_1}-1$ independent Bernoulli experiments. Hence,
 \begin{flalign}
\bar{P}_{e,1}^{bin} =&\sum_{(\bx_1,\bx_2)\in T(P_{X_1,X_2}),\by}\frac{W^n(\by|\bx_1,\bx_2)}{|T(P_{X_1,X_2})|}\zeta(\hat{P}_{\bx_1,\bx_2,\by})\label{eq: still valid for binning}
\end{flalign}
where 
\begin{flalign}\label{eq: zetaequation}
&\zeta(\hat{P}_{\bx_1,\bx_2,\by})=\sum_{T(\tilde{P}_{X_1,X_2,Y}|\by): \frac{q_n(\bx_1,\bx_2',\by)}{q_n(\bx_1,\bx_2,\by)}\geq 1, \tilde{P}_{X_1,X_2}=P_{X_1,X_2}} \nonumber\\
&\times \mbox{Pr}\left\{
\exists i\neq 1, (k,j):(\bX_1(i),\bX_2[k,j])\in T(\tilde{P}_{X_1,X_2|Y}|\by)
\right\}\end{flalign}
where $i\in\{2,...,2^{nR_1}\}$, $j\in\{1,...,2^{nR_2}\}$, $k\in\{1,...,2^{n\gamma}\}$. 

We distinguish between two cases: $(k,j)\neq (1,1)$ and $(k,j)=(1,1)$ since 
\begin{flalign}
& \mbox{Pr}\left\{
\exists i\neq 1, (k,j):(\bX_1(i),\bX_2[k,j])\in T(\tilde{P}_{X_1,X_2|Y}|\by)\right\} \nonumber\\
\doteq & \mbox{Pr}\left\{
\exists i\neq 1:(\bX_1(i),\bX_2[1,1])\in T(\tilde{P}_{X_1,X_2|Y}|\by)\right\} \nonumber\\
&+ \mbox{Pr}\left\{
\exists i\neq 1, (k,j)\neq (1,1):(\bX_1(i),\bX_2[k,j])\in T(\tilde{P}_{X_1,X_2|Y}|\by)\right\} .
\end{flalign}

\noindent{\bf case a: $(k,j)\neq (1,1)$:} 
In this case we use Lemma 3 of  \cite{ScarlettFabregas2012} to obtain:
\begin{flalign}
&\mbox{Pr}\left\{
\exists i\neq 1, (k,j)\neq (1,1):(\bX_1(i),\bX_2[k,j])\in T(\tilde{P}_{X_1,X_2,Y}|\by)\right\}\nonumber\\
\doteq & 2^{-n\max\{\psi_1(\tilde{P},R_1,R_2), \psi_2(\tilde{P},R_1,R_2)\}},
\end{flalign}
where
\begin{flalign}
\psi_1(\tilde{P},R_1,R_2)\triangleq & \left|I_{\tilde{P}}(X_1;Y)+\left| I_{\tilde{P}}(X_2;Y,X_1)-R_2-\gamma \right|^+-R_1 \right|^+\nonumber\\
\psi_2(\tilde{P},R_1,R_2)\triangleq &  \left|I_{\tilde{P}}(X_2;Y)+\left| I_{\tilde{P}}(X_1;Y,X_2)-R_1\right|^+-R_2 -\gamma\right|^+.
\end{flalign}
Since $\gamma=I_P(X_1,X_2)+\epsilon$ where $\epsilon$ is arbitrarily small\footnote{In fact, $\epsilon$ can be replaced, for example, with $\epsilon_n=n^{-1/2}$ to guarantee that the r.h.s.\ of (\ref{eq: GP property}) vanishes super-exponentially fast.}, the functions $\psi_1(\tilde{P},R_1,R_2)$ and $\psi_1(\tilde{P},R_1,R_2)$ converge to
\begin{flalign}
\tilde{\psi}_1(\tilde{P},R_1,R_2)\triangleq & \left|I_{\tilde{P}}(X_1;Y)+\left| I_{\tilde{P}}(X_2;Y|X_1)-R_2\right|^+-R_1 \right|^+\nonumber\\
\tilde{\psi}_2(\tilde{P},R_1,R_2)\triangleq &  \left|I_{\tilde{P}}(X_2;Y)-I_{\tilde{P}}(X_2;X_1)+\left| I_{\tilde{P}}(X_1;Y,X_2)-R_1\right|^+-R_2 \right|^+,
\end{flalign}
respectively, as $n$ tends to infinity.

\noindent{\bf case b: $(k,j)= (1,1)$:} In this case we have
\begin{flalign}
& \mbox{Pr}\left\{
\exists i\neq 1:(\bX_1(i),\bX_2[1,1])\in T(\tilde{P}_{X_1,X_2,Y}|\by)
\right\}\nonumber\\
=&1-\left[1-\frac{|T(\tilde{P}_{X_1|Y,X_2}|\by)|}{|T(P_{X_1})|}\right]^{M_1-1}\nonumber\\
\stackrel{(a)}{\doteq} &\min\{1,M_12^{-nI_{\tilde{P}}(X_1;Y,X_2)}\}\nonumber\\
=&2^{-n\left| I_{\tilde{P}}(X_1;Y,X_2)-R_1  \right|^+}.
\end{flalign}
where $(a)$ follows from (\ref{eq: doteq Lemma}). Gathering the two cases we obtain
\begin{flalign}
& \mbox{Pr}\left\{
\exists i\neq 1(j,k):(\bX_1(i),\bX_2[j,k])\in T(\tilde{P}_{X_1,X_2,Y}|\by)
\right\}\nonumber\\
\doteq &\max\left\{ 2^{-n\left| I_{\tilde{P}}(X_1;Y,X_2)-R_1  \right|^+} ,2^{-n\max\{\tilde{\psi}_1(\tilde{P},R_1,R_2), \tilde{\psi}_2(\tilde{P},R_1,R_2)\}}\right\}.
\end{flalign}

This yields similarly to (\ref{eq: last continuity step}) 
\begin{flalign}\
  \bar{P}_{e,1}^{bin} \doteq & 2^{-n\min\{E_0(P,R_1,R_2),E_1(P,R_1)\}}.
  \end{flalign}

\subsection{Proof of Theorem \ref{cr: first equivalence}}\label{sc: first equivalence}

The proof of the random coding converse follows the line of proof of Theorem 3 of \cite{Lapidoth96}. For the sake of completeness, the proof outline is repeated here and the differences between the proofs are reiterated: Recall that $P_{X_1,X_2}\in \calP_n(\calX_1\times\calX_2)$ is the random coding distribution. We need to show that
if the inequality 
\begin{flalign}\label{eq: Psi definition}
R_1+R_2\leq \min_{\tilde{P}\in\calL_0^{sup}(P)} I(X_1,X_2;Y) \triangleq \Psi(R_1,R_2,P)\end{flalign} 
is violated the ensemble average error probability tends to one as $n$ tends to infinity. The proof of the claim that the first inequality in (\ref{eq: alternative representation sup MAC}), i.e., 
$R_2\leq   \min_{\tilde{P}\in \calL_2(P)} I_{\tilde{P}}(X_2;Y|X_1)$ is a necessary condition for a vanishingly low average probability of error is simpler and thus omitted. 

We follow the steps in  \cite{Lapidoth96}:
\begin{itemize}
\item Step 1: if $R_1+R_2> \min_{\tilde{P}\in\calL_0^{sup}(P)} I(X_1,X_2;Y) $ then there exists a p.m.f.\ $\tilde{P}\in\calL_0^{sup}(P)$ such that
\begin{flalign}
R_1+R_2 >&  I_{\tilde{P}}(X_1,X_2;Y)\label{eq: eq1}\\
R_1>& I_{\tilde{P}}(X_1;Y)\label{eq: eq2}\\
\bE_{\tilde{P}}\{q\} >& \bE_P\{q\}\label{eq: eq3}
\end{flalign}
This follows directly as in \cite{Lapidoth96} by the convexity of the function $\Psi(R_1,R_2,P)$ defined in (\ref{eq: Psi definition}).

\item Step 2: By the continuity of the relative entropy functional
and by Step 1 
we can find some $\Delta>0$, some $\epsilon>0$ , and
a neighborhood $U$ of $\tilde{P}$ such that for every $f\in U$, 
and $\hat{P}(Y)\in T_n^{\epsilon}(P_Y)$ (where $T_n^{\epsilon}(P_Y)$ is the set of strongly $\epsilon$-typical sequences w.r.t.\ $P_Y$),
\begin{flalign}\label{ eq: eq65}
f_{X_1,X_2}=& P_{X_1,X_2}\nonumber\\
R_1+R_2> & D(f_{X_1,X_2|Y}\| P_{X_1}P_{X_2}|\hat{P}_y)+\Delta\nonumber\\
R_1> & D(f_{X_1|Y}\| P_{X_1}|\hat{P}_y)+\Delta\nonumber\\
\bE_f\{q\}>&\bE_P\{q\}+\Delta.
\end{flalign}
We next choose a sufficiently small neighborhood $V$ of
$P$ so that for every $\mu\in V$ we have
\begin{flalign}\label{eq: eq66}
\bE_\mu\{q\}<\bE_P\{q\}+\Delta,\nonumber\\
\mu_Y\in T_n^{\epsilon}(P_Y),
\end{flalign}
We can thus conclude that if the triple $(\bx_1(1),\bx_2(1,1),\by)$ has an empirical type in $V$, and if there exist codewords $(\bx_1(i),\bx_2(i,j))$ with $i\neq 1$ such that the empirical type of $(\bx_1(i),\bx_2(i,j),\by)$ is in $U$, then a decoding error must occur. 
\item Step 3: We will show that the probability that there is no $(i\neq 1,j)$ such that  $(\bx_1(i),\bx_2(i,j),\by)$ is in $U$ vanishes as $n$ tends to infinity. 
We can thus conclude that provided that $\by$ is $\epsilon$-typical w.r.t.\ $P_Y$,  the
conditional probability that there exist some $i\neq 1$ and $j$ 
such that the joint type of $(\bx_1(i),\bx_2(i,j),\by)$ is in $U$ approaches
one. In particular, by (\ref{ eq: eq65}), with probability approaching one,
there exists a pair of incorrect codewords that accumulate a
metric higher than $\bE_P\{q\}+\Delta$.

We have shown in (\ref{eq: this is the equation i need}) that for every $T(Q_{X_1,X_2|Y}|\by)$ such that the marginal distribution of $X_1,X_2$ induced by $Q_{Y,X_1,X_2}=\hat{P}_{\by}\times Q_{X_1,X_2|Y}$ is $P_{X_1,X_2}$, 
\begin{flalign}\label{eq: conditional probability given y}
& \mbox{Pr}\left\{\not\exists i\neq 1, j:\; (\bX_1(i),\bX_2(i,j))\in T(Q_{X_1,X_2|Y}|\by) \right\}\nonumber\\
\doteq& \left(1-2^{-I_Q(X_1;Y)}\min\left\{1,2^{-n[I_Q(X_2;Y|X_1)-R_2]}\right\}\right)^{M_1}\nonumber\\
= & \left(1-\min\left\{2^{-I_Q(X_1;Y)},2^{-n[I_Q(X_1,X_2;Y)-R_2]}\right\}\right)^{M_1}
\end{flalign}
where $M_1=2^{nR_1}$, and it is easily verified to be vanishing as $n$ tends to infinity if $R_1> I_{Q}(X_1;Y)$ and $R_1+R_2>I_{Q}(X_1,X_2;Y)$. 

\item Step 4: By the LLN, the probability that the joint type of
$(\bX_1(1),\bX_2(1,1),\bY)$ is in $V$ approaches one as the blocklength
tends to infinity. In this event, by (\ref{eq: eq66}), the correct codewords
accumulate a metric that is smaller than $\bE_P\{q\}+\Delta$, and thus by
(\ref{ eq: eq65}) an error is bound to occur if there exists such a codeword-pair. Since the conditional
probability (\ref{eq: conditional probability given y}) given $\by$ approaches one for any $\epsilon$-typical $\by$, and since all p.m.f.Õs in $V$ have marginals in
$T_\epsilon^n(P_Y)$ it follows that the probability of error approaches one as the
message length tends to infinity, and the theorem is proved.
\end{itemize}

\subsection{Proof of Corollary \ref{cr: first corollary equivalence superposition}}\label{sc: proof of Corollary equivalence 1}

It remains to prove that $\tilde{\calR}_{cog}^{sup}(P)\subseteq \calR_{cog}^{sup}(P)$. 
To realize this, fix $R_2$ and let $\tilde{R}_1$ be the corresponding maximal user $1$'s rate resulting from $ \calR_{cog}^{sup}(P)$ (\ref{eq: sup constraints}), i.e., the rate $\tilde{R}_1$ which satisfies
\begin{flalign}\label{eq: eq 9328987685746}
\tilde{R}_1 =& \min_{\tilde{P}\in \calL_0(P)}  I_{\tilde{P}}(X_1;Y)+ |I_{\tilde{P}}(X_2;Y|X_1)-R_2|^+.
\end{flalign}
Now, observe that since $|t|^+\geq t, \forall t$, we have 
\begin{flalign}
\tilde{R}_1 =\min_{\tilde{P}\in \calL_0(P):\;  I_{\tilde{P}}(X_1;Y)\leq \tilde{R}_1}  I_{\tilde{P}}(X_1;Y)+ |I_{\tilde{P}}(X_2;Y|X_1)-R_2|^+.
\end{flalign}
This implies that $ \calR_{cog}^{sup}(P)$ is equivalent to 
\begin{flalign}\label{eq: 843279587987}
\bigg\{ (R_1,R_2):\;
&\begin{array}{cc}
R_2\leq   \min_{\tilde{P}\in \calL_2(P)} I_{\tilde{P}}(X_2;Y|X_1), \\
R_1\leq \min_{\tilde{P}\in\calL_0^{sup}(P)}  I_{\tilde{P}}(X_1;Y)+ |I_{\tilde{P}}(X_2;Y|X_1)-R_2|^+
\end{array}\bigg\},
\end{flalign}
which obviously contains $\tilde{\calR}_{cog}^{sup}(P)$ since $|t|^+\geq t, \forall t$, and implies that $\tilde{\calR}_{cog}^{sup}(P)\subseteq \calR_{cog}^{sup}(P)$. 

Another way to realize that $\tilde{\calR}_{cog}^{sup}(P)\subseteq \calR_{cog}^{sup}(P)$ is to consider $E_1(P,R_1,R_2)$ (\ref{eq: exponents dfn}) and note that whenever $R_1>  I_{\tilde{P}}(X_1;Y)$ we have $  \left| I_{\tilde{P}}(X_1;Y)+| I_{\tilde{P}}(X_2;Y|X_1)-R_2|^+ -R_1\right|^+>0$, and that $\left| I_{\tilde{P}}(X_2;Y|X_1)-R_2\right|^+$ $\geq I_{\tilde{P}}(X_2;Y|X_1)-R_2$.

\subsection{Proof of Corollary \ref{cr: Cor cor corollary}}\label{sc: proof of Corollary equivalence 2}

It remains to prove that $\tilde{\calR}_{cog}^{bin}(P)\subseteq \calR_{cog}^{bin}(P)$. It can be shown similarly to Corollary \ref{cr: first corollary equivalence superposition} (see (\ref{eq: eq 9328987685746})-(\ref{eq: 843279587987})), that the union of the following regions is equivalent to $\calR_{cog}^{bin}(P)$ (\ref{eq: bin constraints}):
\begin{flalign}
\left\{ (R_1,R_2):\;\begin{array}{l}
R_1\leq R_1'(P),\\
R_2\leq R_2'(P), \\
 R_1\leq \underset{{\tilde{P}\in\calL_0(P):  I_{\tilde{P}}(X_1;Y)\leq R_1}}{\min}   I_{\tilde{P}}(X_1;Y)+ |I_{\tilde{P}}(X_2;Y|X_1)-R_2|^+ 
 \end{array}\right\},
\end{flalign}
and
 \begin{flalign}
\left\{ (R_1,R_2):\;\begin{array}{l}
R_1\leq R_1'(P),\\
R_2\leq R_2'(P), \\
R_2 \leq \underset{\tilde{P}\in\calL_0(P):\;  I_{\tilde{P}}(X_2;Y)- I_{\tilde{P}}(X_2;X_1)\leq R_2}{\min}  I_{\tilde{P}}(X_2;Y)-I_{\tilde{P}}(X_2;X_1)+ |I_{\tilde{P}}(X_1;Y,X_2)-R_1|^+
 \end{array}\right\}.
\end{flalign}
Clearly, this union contains the region 
\begin{flalign}\label{eq: two regions union}
\left\{ (R_1,R_2):\;    \begin{array}{l}
R_1\leq R_1'(P),\\
R_2\leq R_2'(P), \\
 R_1+R_2\leq \max\left\{ \begin{array}{l}\underset{\tilde{P}\in\calL_0(P):  I_{\tilde{P}}(X_1;Y)\leq R_1}{\min}  I_{\tilde{P}}(X_1,X_1;Y),\\
 \underset{\tilde{P}\in\calL_0(P):\;  I_{\tilde{P}}(X_2;Y)- I_{\tilde{P}}(X_2;X_1)\leq R_2}{\min}  I_{\tilde{P}}(X_1,X_2;Y)\end{array}\right\}
\end{array}\right\}.
\end{flalign}
Note that $ I_{\tilde{P}}(X_1,X_1;Y)$ is convex in $\tilde{P}_{X_1,X_2|Y}$ for fixed $\tilde{P}_Y$ (and by definition of $\tilde{P}\in\calL_0(P)$, $\tilde{P}_Y=P_Y$ is fixed and not minimized) and since the sets over which the minimizations are performed are convex the  sum-rate bound in (\ref{eq: two regions union}) is equal to 
 sum-rate bound 
\begin{flalign}\label{eq: kajhfkjh.kuh}
R_1+R_2\leq\min_{\tilde{P}\in\calL_0^{bin}(P)} I_{\tilde{P}}(X_1,X_1;Y).
\end{flalign}
This yields that $\tilde{\calR}_{cog}^{bin}(P)\subseteq \calR_{cog}^{bin}(P)$, and concludes the proof of Corollary \ref{cr: Cor cor corollary}.


\end{document}